\providecommand{\tabularnewline}{\\}
\theoremstyle{plain}
\newtheorem{thm}{\protect\theoremname}
\theoremstyle{definition}
\newtheorem{defn}[thm]{\protect\definitionname}
\theoremstyle{plain}
\newtheorem{prop}[thm]{\protect\propositionname}
\theoremstyle{plain}
\newtheorem{lem}[thm]{\protect\lemmaname}
\newenvironment{proof}[1][\protect\proofname]{\par
\normalfont\topsep6\p@\@plus6\p@\relax
\trivlist
\itemindent\parindent
\item[\hskip\labelsep
\scshape
#1]\ignorespaces
}{%
\endtrivlist\@endpefalse
}
\theoremstyle{plain}
\newtheorem{cor}[thm]{\protect\corollaryname}
\author{
 Dino Sejdinovic, Arthur Gretton \\
Gatsby Unit, CSML, UCL, UK \\
\texttt{ \{dino.sejdinovic,arthur.gretton\}@gmail.com} \\
\AND
Wicher Bergsma \\
Department of Statistics, LSE, UK \\
\texttt{w.p.bergsma@lse.ac.uk}
}
\newcommand\ci{\protect\mathpalette{\protect\cI}{\perp}}
\def\cI#1#2{\mathrel{\rlap{$#1#2$}\mkern2mu{#1#2}}}
\def\cI#1#2{\mathrel{\rlap{$#1#2$}\mkern2mu{#1#2}}}
\providecommand{\corollaryname}{Corollary}
\providecommand{\definitionname}{Definition}
\providecommand{\lemmaname}{Lemma}
\providecommand{\proofname}{Proof}
\providecommand{\propositionname}{Proposition}
\providecommand{\theoremname}{Theorem}
\begin{document}
\global\long\def\one{\mathbf{1}}
\global\long\def\onet{\mathbf{1}^{\top}}
\global\long\def\sumall{\Xi}
\global\long\def\iid{\overset{i.i.d.}{\sim}}

\title{A Kernel Test for Three-Variable Interactions}
\maketitle
\begin{abstract}
We introduce kernel nonparametric tests for Lancaster three-variable
interaction and for total independence, using embeddings of signed
measures into a reproducing kernel Hilbert space. The resulting test
statistics are straightforward to compute, and are used in powerful
interaction tests, which are consistent against all alternatives for
a large family of reproducing kernels. We show the Lancaster test
to be sensitive to cases where two independent causes individually
have weak influence on a third dependent variable, but their combined
effect has a strong influence. This makes the Lancaster test especially
suited to finding structure in directed graphical models, where it
outperforms competing nonparametric tests in detecting such V-structures.
\end{abstract}

\section{Introduction}

The problem of nonparametric testing of interaction between variables
has been widely treated in the machine learning and statistics literature.
Much of the work in this area focuses on measuring or testing pairwise
interaction: for instance, the Hilbert-Schmidt Independence Criterion
(HSIC) or Distance Covariance \cite{GreBouSmoSch05,Szekely2007,Sejdinovic2012},
kernel canonical correlation \cite{BacJor02,FukBacGre07,DauNki98},
and mutual information \cite{pal10renyi}. In cases where more than
two variables interact, however, the questions we can ask about their
interaction become significantly more involved. The simplest case
we might consider is whether the variables are mutually independent,
$P_{X}=\prod_{i=1}^{d}P_{X_{i}}$, as considered in $\mathbb{R}^{d}$
by \cite{Kankainen95}. This is already a more general question than
pairwise independence, since pairwise independence does not imply
total (mutual) independence, while the implication holds in the other
direction. For example, if $X$ and $Y$ are i.i.d. uniform on $\left\{ -1,1\right\} $,
then $\left(X,Y,XY\right)$ is a pairwise independent but mutually
dependent triplet \cite{Bernstein46}. Tests of total and pairwise
independence are insufficient, however, since they do not rule out
all third order factorizations of the joint distribution. 

An important class of high order interactions occurs when the simultaneous
effect of two variables on a third may not be additive. In particular,
it may be possible that $X\ci Z$ and $Y\ci Z$, whereas $\neg\left((X,Y)\ci Z\right)$
(for example, neither adding sugar to coffee nor stirring the coffee
individually have an effect on its sweetness but the joint presence
of the two does). In addition, study of three-variable interactions
can elucidate certain switching mechanisms between positive and negative
correlation of two genes expressions, as controlled by a third gene
\cite{Kayano2009}. The presence of such interactions is typically
tested using some form of analysis of variance (ANOVA) model which
includes additional interaction terms, such as products of individual
variables. Since each such additional term requires a new hypothesis
test, this increases the risk that some hypothesis test will produce
a false positive by chance. Therefore, a test that is able to directly
detect the presence of \emph{any kind }of\emph{ }higher-order interaction
would be of a broad interest in statistical modeling. In the present
work, we provide to our knowledge the first nonparametric test for
three-variable interaction. This work generalizes the HSIC test of
pairwise independence, and has as its test statistic the norm of an
embedding of an appropriate signed measure to a reproducing kernel
Hilbert space (RKHS). When the statistic is non-zero, all third order
factorizations can be ruled out. Moreover, this test is applicable
to the cases where $X$, $Y$ and $Z$ are themselves multivariate
objects, and may take values in non-Euclidean or structured domains.%
\footnote{As the reader might imagine, the situation becomes more complex again
when four or more variables interact simultaneously; we provide a
brief technical overview in Section \ref{sec:The-case-D-bigger-3}.%
}

One important application of interaction measures is in learning structure
for graphical models. If the graphical model is assumed to be Gaussian,
then second order interaction statistics may be used to construct
an undirected graph \cite{MeiBuh06,RavWaiRasYu11}. When the interactions
are non-Gaussian, however, other approaches are brought to bear. An
alternative approach to structure learning is to employ conditional
independence tests. In the PC algorithm \cite{Pearl01,spirtes:00,KalBuh07},
a V-structure (two independent variables with directed edges towards
a third variable) is detected when an independence test between the
parent variables accepts the null hypothesis, while a test of dependence
of the parents conditioned on the child rejects the null hypothesis.
The PC algorithm gives a correct equivalence class of structures subject
to the causal Markov and faithfulness assumptions, in the absence
of hidden common causes. The original implementations of the PC algorithm
rely on partial correlations for testing, and assume Gaussianity.
A number of algorithms have since extended the basic PC algorithm
to arbitrary probability distributions over multivariate random variables
\cite{SunJanSchFuk07,TilGreSpi09,Zhang2011}, by using nonparametric
kernel independence tests \cite{GreFukTeoSonetal08} and conditional
dependence tests \cite{FukGreSunSch08,Zhang2011}.  We observe that
our Lancaster interaction based test provides a strong alternative
to the conditional dependence testing approach, and is seen to outperform
earlier approaches in detecting cases where independent parent variables
weakly influence the child variable when considered individually,
but have a strong combined influence. 

We begin our presentation in Section \ref{sub:Interaction-measure}
with a definition of interaction measures, these being the signed
measures we will embed in an RKHS. We cover this embedding procedure
in Section \ref{sec:Kernel-Embeddings}. We then proceed in Section
\ref{sec:Interaction-tests} to define pairwise and three way interactions.
We describe a statistic to test mutual independence for more than
three variables, and provide a brief overview of the more complex
high-order interactions that may be observed when four or more variables
are considered. Finally, we provide experimental benchmarks in Section
\ref{sec:Experiments}.

Matlab code for interaction tests considered in the paper is available at \url{http://www.gatsby.ucl.ac.uk/~gretton/interact/threeWayInteract.htm}

\section{Interaction measure\label{sub:Interaction-measure}}

An interaction measure \cite{Lancaster1969,Streitberg1990} associated
to a multidimensional probability distribution $P$ of a random vector
$\left(X_{1},\ldots,X_{D}\right)$ taking values in the product space
$\mathcal{X}_{1}\times\cdots\times\mathcal{X}_{D}$ is a signed measure
$\Delta P$ that vanishes whenever $P$ can be factorised in a non-trivial
way as a product of its (possibly multivariate) marginal distributions.
For the cases $D=2,3$ the correct interaction measure coincides with
the the notion introduced by Lancaster \cite{Lancaster1969} as a
formal product
\begin{eqnarray}
\Delta_{L}P & = & \prod_{i=1}^{D}\left(P_{X_{i}}^{*}-P_{X_{i}}\right),\label{eq: Lancaster}
\end{eqnarray}
where $\prod_{j=1}^{D'}P_{X_{i_{j}}}^{*}$ is understood as a joint
probability distribution of a subvector $\left(X_{i_{1}},\ldots,X_{i_{D'}}\right)$.
We will term the signed measure in \eqref{eq: Lancaster} the \emph{Lancaster
interaction measure}. In the case of a bivariate distribution, the
Lancaster interaction measure is simply the difference between the
joint probability distribution and the product of the marginal distributions
(the only possible non-trivial factorization for $D=2$), $\Delta_{L}P=P_{XY}-P_{X}P_{Y}$,
while in the case $D=3$, we obtain
\begin{eqnarray}
\Delta_{L}P & = & P_{XYZ}-P_{XY}P_{Z}-P_{YZ}P_{X}-P_{XZ}P_{Y}+2P_{X}P_{Y}P_{Z}.\label{eq: 3Lancaster}
\end{eqnarray}
It is readily checked that
\begin{eqnarray}
(X,Y)\ci Z\,\vee\,(X,Z)\ci Y\,\vee\,(Y,Z)\ci X\, & \Rightarrow & \Delta_{L}P=0.\label{eq: H0-Lancaster}
\end{eqnarray}
For $D>3$, however, \eqref{eq: Lancaster} does not capture all possible
factorizations of the joint distribution, e.g., for $D=4$, it need
not vanish if $(X_{1},X_{2})\ci(X_{3},X_{4})$, but $X_{1}$ and $X_{2}$
are dependent and $X_{3}$ and $X_{4}$ are dependent. Streitberg
\cite{Streitberg1990} corrected this definition using a more complicated
construction with the M\"{o}bius function on the lattice of partitions,
which we describe in Section \ref{sec:The-case-D-bigger-3}. In this
work, however, we will focus on the case of three variables and formulate
interaction tests based on embedding of \eqref{eq: 3Lancaster} into
an RKHS.

The implication \eqref{eq: H0-Lancaster} states that the presence
of Lancaster interaction rules out the possibility of any factorization
of the joint distribution, but the converse is not generally true;
see Appendix \ref{sec: counterexample} for details. In addition,
it is important to note the distinction between the absence of Lancaster
interaction and the total (mutual) independence of $(X,Y,Z)$, i.e.,
$P_{XYZ}=P_{X}P_{Y}P_{Z}$. While total independence implies the absence
of Lancaster interaction, the signed measure $\Delta_{tot}P=P_{XYZ}-P_{X}P_{Y}P_{Z}$
associated to the total (mutual) independence of $(X,Y,Z)$ does not
vanish if, e.g., $(X,Y)\ci Z$, but $X$ and $Y$ are dependent. 

In this contribution, we construct the non-parametric test for the
hypothesis $\Delta_{L}P=0$ (no Lancaster interaction), as well as
the non-parametric test for the hypothesis $\Delta_{tot}P=0$ (total
independence), based on the embeddings of the corresponding signed
measures $\Delta_{L}P$ and $\Delta_{tot}P$ into an RKHS. Both tests
are particularly suited to the cases where $X$, $Y$ and $Z$ take
values in a high-dimensional space, and, moreover, they remain valid
for a variety of non-Euclidean and structured domains, i.e., for all
topological spaces where it is possible to construct a valid positive
definite function; see \cite{FukSriGreSch09} for details. In the
case of total independence testing, our approach can be viewed as
a generalization of the tests proposed in \cite{Kankainen1995} based
on the empirical characteristic functions.

\section{Kernel Embeddings \label{sec:Kernel-Embeddings}}

We review the embedding of signed measures to a reproducing kernel
Hilbert space. The RKHS norms of such embeddings will then serve as
our test statistics. Let $\mathcal{Z}$ be a topological space. According
to the Moore-Aronszajn theorem \cite[p. 19]{BerTho04}, for every
symmetric, positive definite function (henceforth \emph{kernel}) $k:\mathcal{Z}\times\mathcal{Z}\to\mathbb{R}$,
there is an associated reproducing kernel Hilbert space (RKHS) $\mathcal{H}_{k}$
of real-valued functions on $\mathcal{Z}$ with reproducing kernel
$k$. The map $\varphi:\mathcal{Z}\to\mathcal{H}_{k}$, $\varphi:z\mapsto k(\cdot,z)$
is called the canonical feature map or the Aronszajn map of $k$.
Denote by $\mathcal{M}(\mathcal{Z})$ the Banach space of all finite
signed Borel measures on $\mathcal{Z}$. The notion of a feature map
can then be extended to kernel embeddings of elements of $\mathcal{M}(\mathcal{Z})$
\cite[Chapter 4]{BerTho04}. 
\begin{defn}
(\textbf{Kernel embedding}) Let $k$ be a kernel on $\mathcal{Z}$,
and $\nu\in\mathcal{M}(\mathcal{Z})$. The \emph{kernel embedding}
of $\nu$ into the RKHS $\mathcal{H}_{k}$ is $\mu_{k}(\nu)\in\mathcal{H}_{k}$
such that $\int f(z)d\nu(z)=\left\langle f,\mu_{k}(\nu)\right\rangle _{\mathcal{H}_{k}}$
for all $f\in\mathcal{H}_{k}$.
\end{defn}
Alternatively, the kernel embedding can be defined by the Bochner
integral $\mu_{k}(\nu)=\int k(\cdot,z)\, d\nu(z)$. If a measurable
kernel $k$ is a bounded function, it is straightforward to show using
the Riesz representation theorem that $\mu_{k}(\nu)$ exists for all
$\nu\in\mathcal{M}(\mathcal{Z})$.%
\footnote{Unbounded kernels can also be considered, however \cite{Sejdinovic2012a}.
In this case, one can still study embeddings of the signed measures
$\mathcal{M}_{k}^{1/2}(\mathcal{Z})\subset\mathcal{M}(\mathcal{Z})$,
which satisfy a finite moment condition, i.e., $\mathcal{M}_{k}^{1/2}(\mathcal{Z})=\left\{ \nu\in\mathcal{M}(\mathcal{Z})\,:\,\int k^{1/2}(z,z)\, d\vert\nu\vert(z)<\infty\right\} $
.%
} For many interesting bounded kernels $k$, including the Gaussian,
Laplacian and inverse multiquadratics, the embedding $\mu_{k}:\mathcal{M}(\mathcal{Z})\to\mathcal{H}_{k}$
is injective. Such kernels are said to be \emph{integrally strictly
positive definite} (ISPD) \cite[p. 4]{Sriperumbudur2011}. A related
but weaker notion is that of a \emph{characteristic} kernel \cite{FukGreSunSch08,SriGreFukLanetal10},
which requires the kernel embedding to be injective only on the set
$\mathcal{M}_{+}^{1}(\mathcal{Z})$ of probability measures. In the
case that $k$ is ISPD, since $\mathcal{H}_{k}$ is a Hilbert space,
we can introduce a notion of an inner product between two signed measures
$\nu,\nu'\in\mathcal{M}(\mathcal{Z})$,
\[
\left\langle \left\langle \nu,\nu'\right\rangle \right\rangle _{k}:=\left\langle \mu_{k}(\nu),\mu_{k}(\nu')\right\rangle _{\mathcal{H}_{k}}=\int k(z,z')d\nu(z)d\nu'(z').
\]
Since $\mu_{k}$ is injective, this is a valid inner product and induces
a norm on $\mathcal{M}(\mathcal{Z})$, for which $\left\Vert \nu\right\Vert _{k}=\left\langle \left\langle \nu,\nu\right\rangle \right\rangle _{k}^{1/2}=0$
if and only if $\nu=0$. This fact has been used extensively in the
literature to formulate: (a) a nonparametric two-sample test based
on estimation of \emph{maximum mean discrepancy} $\left\Vert P-Q\right\Vert _{k}$,
for samples $\left\{ X_{i}\right\} _{i=1}^{n}\overset{i.i.d.}{\sim}P$,
$\left\{ Y_{i}\right\} _{i=1}^{m}\overset{i.i.d.}{\sim}Q$ \cite{Gretton2012}
and (b) a nonparametric independence test based on estimation of $\left\Vert P_{XY}-P_{X}P_{Y}\right\Vert _{k\otimes l}$,
for a joint sample $\left\{ \left(X_{i},Y_{i}\right)\right\} _{i=1}^{n}\overset{i.i.d.}{\sim}P_{XY}$
\cite{GreFukTeoSonetal08} (the latter is also called a Hilbert-Schmidt
independence criterion), with kernel $k\otimes l$ on the product
space defined as $k(x,x')l(y,y')$. When a bounded characteristic
kernel is used, the above tests are \emph{consistent against all alternatives},
and their alternative interpretation is as a generalization \cite{Sejdinovic2012,Sejdinovic2012a}
of energy distance \cite{Szekely2004,Baringhaus2004} and distance
covariance \cite{Szekely2007,SzeRiz09}.

In this article, we extend this approach to the three-variable case,
and formulate tests for both the Lancaster interaction and for the
total independence, using simple consistent estimators of $\left\Vert \Delta_{L}P\right\Vert _{k\otimes l\otimes m}$
and $\left\Vert \Delta_{tot}P\right\Vert _{k\otimes l\otimes m}$
respectively, which we describe in the next Section. Using the same
arguments as in the tests of \cite{Gretton2012,GreFukTeoSonetal08},
these tests are also consistent against all alternatives as long as
ISPD kernels are used.

\section{Interaction tests\label{sec:Interaction-tests}}

\textbf{Notational remarks:} Throughout the paper, $\circ$ denotes
an Hadamard (entrywise) product. Let $A$ be an $n\times n$ matrix,
and $K$ a symmetric $n\times n$ matrix. We will fix the following
notational conventions: $\one$ denotes an $n\times1$ column of ones;
$A_{+j}=\sum_{i=1}^{n}A_{ij}$ denotes the sum of all elements of
the $j$-th column of $A$; $A_{i+}=\sum_{j=1}^{n}A_{ij}$ denotes
the sum of all elements of the $i$-th row of $A$; $A_{++}=\sum_{i=1}^{n}\sum_{j=1}^{n}A_{ij}$
denotes the sum of all elements of $A$; $K_{+}=\one\onet K$, i.e.,
$\left[K_{+}\right]_{ij}=K_{+j}=K_{j+}$, and $\left[K_{+}^{\top}\right]_{ij}=K_{i+}=K_{+i}.$

\subsection{Two-Variable (Independence) Test}

We provide a short overview of the kernel independence test of \cite{GreFukTeoSonetal08},
which we write as the RKHS norm of the embedding of a signed measure.
While this material is not new (it appears in \cite[Section 7.4]{Gretton2012}),
it will help define how to proceed when a third variable is introduced,
and the signed measures become more involved. We begin by expanding
the squared RKHS norm $\left\Vert P_{XY}-P_{X}P_{Y}\right\Vert _{k\otimes l}^{2}$
as inner products, and applying the reproducing property, 
\begin{eqnarray}
\left\Vert P_{XY}-P_{X}P_{Y}\right\Vert _{k\otimes l}^{2} & = & \mathbb{E}_{XY}\mathbb{E}_{X'Y'}k(X,X')l(Y,Y')\;+\mathbb{E}_{X}\mathbb{E}_{X'}k(X,X')\mathbb{E}_{Y}\mathbb{E}_{Y'}l(Y,Y')\nonumber \\
 &  & \qquad-\,2\mathbb{E}_{X'Y'}\left[\mathbb{E}_{X}k(X,X')\mathbb{E}_{Y}l(Y,Y')\right],\label{eq: HSIC}
\end{eqnarray}
where $(X,Y)$ and $(X',Y')$ are independent copies of random variables
on $\mathcal{X}\times\mathcal{Y}$ with distribution $P_{XY}$.

Given a joint sample $\left\{ \left(X_{i},Y_{i}\right)\right\} _{i=1}^{n}\overset{i.i.d.}{\sim}P_{XY}$,
an empirical estimator of $\left\Vert P_{XY}-P_{X}P_{Y}\right\Vert _{k\otimes l}^{2}$
is obtained by substituting corresponding empirical means into \eqref{eq: HSIC},
which can be represented using Gram matrices $K$ and $L$ ($K_{ij}=k(X_{i},X_{j})$,
$L_{ij}=l(Y_{i},Y_{j})$),

\begin{align*}
\hat{\mathbb{E}}_{XY}\hat{\mathbb{E}}_{X'Y'}k(X,X')l(Y,Y') & =\frac{1}{n^{2}}\sum_{a=1}^{n}\sum_{b=1}^{n}K_{ab}L_{ab}=\frac{1}{n^{2}}\left(K\circ L\right)_{++},\\
\mathbb{\hat{E}}_{X}\hat{\mathbb{E}}_{X'}k(X,X')\hat{\mathbb{E}}_{Y}\hat{\mathbb{E}}_{Y'}l(Y,Y') & =\frac{1}{n^{4}}\sum_{a=1}^{n}\sum_{b=1}^{n}\sum_{c=1}^{n}\sum_{d=1}^{n}K_{ab}L_{cd}=\frac{1}{n^{4}}K_{++}L_{++},\\
\hat{\mathbb{E}}_{X'Y'}\left[\hat{\mathbb{E}}_{X}k(X,X')\hat{\mathbb{E}}_{Y}l(Y,Y')\right] & =\frac{1}{n^{3}}\sum_{a=1}^{n}\sum_{b=1}^{n}\sum_{c=1}^{n}K_{ac}L_{bc}=\frac{1}{n^{3}}\left(KL\right)_{++}.
\end{align*}

\begin{table}
\caption{\label{tab: 2vartab}{\footnotesize $V$-statistic estimates of $\left\langle \left\langle \nu,\nu'\right\rangle \right\rangle _{k\otimes l}$
in the two-variable case}}

\centering{}{\small }%
\begin{tabular}{|c|c|c|}
\hline 
{\small $\nu\backslash\nu'$} & {\small $P_{XY}$} & {\small $P_{X}P_{Y}$}\tabularnewline
\hline 
\hline 
{\small $P_{XY}$} & {\small $\frac{1}{n^{2}}\left(K\circ L\right)_{++}$} & {\small $\frac{1}{n^{3}}\left(KL\right)_{++}$}\tabularnewline
\hline 
{\small $P_{X}P_{Y}$} &  & {\small $\frac{1}{n^{4}}K_{++}L_{++}$}\tabularnewline
\hline 
\end{tabular}
\end{table}
Since these are V-statistics, there is a bias of $O_{P}(n^{-1})$;
U-statistics may be used if an unbiased estimate is needed. Each of
the terms above corresponds to an estimate of an inner product $\left\langle \left\langle \nu,\nu'\right\rangle \right\rangle _{k\otimes l}$
for probability measures $\nu$ and $\nu'$ taking values in $\left\{ P_{XY},P_{X}P_{Y}\right\} $,
as summarized in Table \ref{tab: 2vartab}. Even though the second
and third terms involve triple and quadruple sums, each of the empirical
means can be computed using sums of all terms of certain matrices,
where the dominant computational cost is in computing the matrix product
$KL$. In fact, the overall estimator can be computed in an even simpler
form (see Proposition \ref{prop: 2varcase-2} in Appendix \ref{sec: matrix_algebra}),
as $\left\Vert \hat{P}_{XY}-\hat{P}_{X}\hat{P}_{Y}\right\Vert _{k\otimes l}^{2}=\frac{1}{n^{2}}\left(K\circ HLH\right)_{++},$
where $H=I-\frac{1}{n}\one\onet$ is the centering matrix. Note that
by the idempotence of $H$, we also have that $\left(K\circ HLH\right)_{++}=\left(HKH\circ HLH\right)_{++}$.
In the rest of the paper, for any Gram matrix $K$, we will denote
its corresponding centered matrix $HKH$ by $\tilde{K}$. When three
variables are present, a two-variable test already allows us to determine
whether for instance $(X,Y)\ci Z$, i.e., whether $P_{XYZ}=P_{XY}P_{Z}$.
It is sufficient to treat $(X,Y)$ as a single variable on the product
space $\mathcal{X}\times\mathcal{Y}$, with the product kernel $k\otimes l$.
Then, the Gram matrix associated to $(X,Y)$ is simply $K\circ L$,
and the corresponding $V$-statistic is $\frac{1}{n^{2}}\left(K\circ L\circ\tilde{M}\right)_{++}$.%
\footnote{In general, however, this approach would require some care since,
e.g., $X$ and $Y$ could be measured on very different scales, and
the choice of kernels $k$ and $l$ needs to take this into account.%
} What is not obvious, however, is if a V-statistic for the Lancaster
interaction (which can be thought of as a surrogate for the composite
hypothesis of various factorizations) can be obtained in a similar
form. We will address this question in the next section.

\subsection{Three-Variable tests}

As in the two-variable case, it suffices to derive V-statistics for
inner products $\left\langle \left\langle \nu,\nu'\right\rangle \right\rangle _{k\otimes l\otimes m}$,
where $\nu$ and $\nu'$ take values in all possible combinations
of the joint and the products of the marginals, i.e., $P_{XYZ}$,
$P_{XY}P_{Z}$, etc. Again, it is easy to see that these can be expressed
as certain expectations of kernel functions, and thereby can be calculated
by an appropriate manipulation of the three Gram matrices. We summarize
the resulting expressions in Table \ref{tab: 3vartab} - their derivation
is a tedious but straightforward linear algebra exercise. For compactness,
the appropriate normalizing terms are moved inside the measures considered.

\begin{table}
\caption{\label{tab: 3vartab}{\footnotesize $V$-statistic estimates of $\left\langle \left\langle \nu,\nu'\right\rangle \right\rangle _{k\otimes l\otimes m}$
in the three-variable case}}

\centering{}\textbf{\scriptsize }%
\begin{tabular}{|c|c|c|c|c|c|}
\hline 
{\scriptsize $\nu\backslash\nu'$} & \textbf{\scriptsize $nP_{XYZ}$} & \textbf{\scriptsize $n^{2}P_{XY}P_{Z}$} & \textbf{\scriptsize $n^{2}P_{XZ}P_{Y}$} & \textbf{\scriptsize $n^{2}P_{YZ}P_{X}$} & \textbf{\scriptsize $n^{3}P_{X}P_{Y}P_{Z}$}\tabularnewline
\hline 
\hline 
\textbf{\scriptsize $nP_{XYZ}$} & \textbf{\scriptsize $\left(K\circ L\circ M\right)_{++}$} & \textbf{\scriptsize $\left(\left(K\circ L\right)M\right)_{++}$} & \textbf{\scriptsize $\left(\left(K\circ M\right)L\right)_{++}$} & \textbf{\scriptsize $\left(\left(M\circ L\right)K\right)_{++}$} & \textbf{\scriptsize $tr(K_{+}\circ L_{+}\circ M_{+})$}\tabularnewline
\hline 
\textbf{\scriptsize $n^{2}P_{XY}P_{Z}$} &  & \textbf{\scriptsize $\left(K\circ L\right)_{++}M_{++}$} & \textbf{\scriptsize $\left(MKL\right)_{++}$} & \textbf{\scriptsize $\left(KLM\right)_{++}$} & \textbf{\scriptsize $(KL)_{++}M_{++}$}\tabularnewline
\hline 
\textbf{\scriptsize $n^{2}P_{XZ}P_{Y}$} &  &  & \textbf{\scriptsize $\left(K\circ M\right)_{++}L_{++}$} & \textbf{\scriptsize $\left(KML\right)_{++}$} & \textbf{\scriptsize $(KM)_{++}L_{++}$}\tabularnewline
\hline 
\textbf{\scriptsize $n^{2}P_{YZ}P_{X}$} &  &  &  & \textbf{\scriptsize $\left(L\circ M\right)_{++}K_{++}$} & \textbf{\scriptsize $(LM)_{++}K_{++}$}\tabularnewline
\hline 
\textbf{\scriptsize $n^{3}P_{X}P_{Y}P_{Z}$} &  &  &  &  & \textbf{\scriptsize $K_{++}L_{++}M_{++}$}\tabularnewline
\hline 
\end{tabular}
\end{table}

Based on the individual RKHS inner product estimators, we can now
easily derive estimators for various signed measures arising as linear
combinations of $P_{XYZ},P_{XY}P_{Z},$ and so on. The first such
measure is an {}``incomplete'' Lancaster interaction measure $\Delta_{(Z)}P=P_{XYZ}+P_{X}P_{Y}P_{Z}-P_{YZ}P_{X}-P_{XZ}P_{Y}$,
which vanishes if $(Y,Z)\ci X$ or $(X,Z)\ci Y$, but not necessarily
if $(X,Y)\ci Z$. We obtain the following result for the empirical
measure $\hat{P}$.
\begin{prop}
[Incomplete Lancaster interaction] \label{prop: Incomplete-Lancaster-interaction}
$\left\Vert \Delta_{(Z)}\hat{P}\right\Vert _{k\otimes l\otimes m}^{2}=\frac{1}{n^{2}}\left(\tilde{K}\circ\tilde{L}\circ M\right)_{++}.$
\end{prop}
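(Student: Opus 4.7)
The plan is to avoid a brute-force expansion of $\left\Vert\Delta_{(Z)}\hat P\right\Vert^2$ using the ten entries of Table~\ref{tab: 3vartab}, and instead exhibit a single feature-map representation of $\mu_{k\otimes l\otimes m}(\Delta_{(Z)}\hat P)$ from which the claim falls out immediately via the reproducing property. The key observation is that Lancaster-style signed measures admit embeddings that factor nicely through \emph{centered} feature maps.

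First, I would write the four empirical embeddings as elements of $\mathcal{H}_k\otimes\mathcal{H}_l\otimes\mathcal{H}_m$ with explicit paired indices, using $\varphi_k(x)=k(\cdot,x)$, $\varphi_l(y)=l(\cdot,y)$, $\varphi_m(z)=m(\cdot,z)$. For example,
$$\mu(\hat P_{XYZ})=\tfrac{1}{n}\sum_{i=1}^n \varphi_k(X_i)\otimes\varphi_l(Y_i)\otimes\varphi_m(Z_i),$$
while $\mu(\hat P_Y\hat P_{XZ})=\tfrac{1}{n^2}\sum_{i,j}\varphi_k(X_i)\otimes\varphi_l(Y_j)\otimes\varphi_m(Z_i)$ (with $X,Z$ paired and $Y$ free), and analogously for $\mu(\hat P_X\hat P_{YZ})$ and $\mu(\hat P_X\hat P_Y\hat P_Z)$. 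Care must be taken that the correct samples are paired in each product measure; this bookkeeping is essentially the only source of potential error in the argument.

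Next, define centered features $\tilde\varphi_k(X_i):=\varphi_k(X_i)-\tfrac{1}{n}\sum_a\varphi_k(X_a)$ and $\tilde\varphi_l(Y_i)$ analogously. Expanding the tensor product
$$\tfrac{1}{n}\sum_i \tilde\varphi_k(X_i)\otimes\tilde\varphi_l(Y_i)\otimes\varphi_m(Z_i)$$
yields exactly four terms which, by direct inspection of the indices, can be identified with $\mu(\hat P_{XYZ})$, $-\mu(\hat P_Y\hat P_{XZ})$, $-\mu(\hat P_X\hat P_{YZ})$, and $+\mu(\hat P_X\hat P_Y\hat P_Z)$ respectively. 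Hence this centered tensor-mean equals $\mu(\Delta_{(Z)}\hat P)$; the signed combination defining $\Delta_{(Z)}$ is precisely what centering in two of the three factors produces, with the $Z$-factor left uncentered. This is the crux of the argument.

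Finally, taking the squared RKHS norm of this centered mean and invoking the reproducing property,
$$\|\mu(\Delta_{(Z)}\hat P)\|^2_{k\otimes l\otimes m}=\tfrac{1}{n^2}\sum_{i,j}\langle\tilde\varphi_k(X_i),\tilde\varphi_k(X_j)\rangle\langle\tilde\varphi_l(Y_i),\tilde\varphi_l(Y_j)\rangle\langle\varphi_m(Z_i),\varphi_m(Z_j)\rangle=\tfrac{1}{n^2}\bigl(\tilde K\circ\tilde L\circ M\bigr)_{++},$$
using that $\langle\tilde\varphi_k(X_i),\tilde\varphi_k(X_j)\rangle=K_{ij}-\tfrac{1}{n}K_{i+}-\tfrac{1}{n}K_{+j}+\tfrac{1}{n^2}K_{++}=(HKH)_{ij}=\tilde K_{ij}$, and similarly for $\tilde L$ and $M$. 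The main obstacle, modest but easy to bungle, is the index bookkeeping in Step~1; once the four product-measure embeddings are written out correctly, Step~2 is a one-line tensor expansion and Step~3 is just applying the reproducing property.
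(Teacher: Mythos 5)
Your proof is correct and, notably, it is \emph{not} the route the paper takes for this proposition. The paper proves Proposition~\ref{prop: Incomplete-Lancaster-interaction} by brute force: it expands $\tilde K\circ\tilde L\circ M$ into sixteen terms (taking the Hadamard product $HKH\circ HLH\circ M$ and distributing), sums all entries using a collection of matrix-algebra identities ($A_{++}=((K\circ M)L)_{++}$, $C_{++}=\mathrm{tr}(K_+\circ L_+\circ M_+)$, etc.), and then matches the resulting sum term-by-term against the dictionary of inner-product $V$-statistics in Table~\ref{tab: 3vartab}. Your argument instead exhibits $\mu(\Delta_{(Z)}\hat P)$ directly as the empirical mean $\tfrac1n\sum_i\tilde\varphi_k(X_i)\otimes\tilde\varphi_l(Y_i)\otimes\varphi_m(Z_i)$ of partially-centered tensor features, and reads off the norm from $\langle\tilde\varphi_k(X_i),\tilde\varphi_k(X_j)\rangle=\tilde K_{ij}$. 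This is the more conceptual route, and it is essentially the same mechanism the paper reserves for its \emph{alternative} proof of the full Lancaster statement (Proposition~\ref{prop: Lancaster-interaction}), phrased there at the level of kernels centered at probability measures ($\tilde k_{\hat P_X}$, etc.) rather than centered feature vectors; your version is the empirical feature-map translation of that idea, applied with the $Z$-factor left uncentered. What the paper's brute-force proof buys is a demonstration that the Table-\ref{tab: 3vartab} dictionary actually closes and that every intermediate cross term is individually computable; what your approach buys is a one-shot derivation that makes transparent why centering two of the three Gram matrices corresponds exactly to the four-term alternating sum defining $\Delta_{(Z)}$. The index-pairing bookkeeping you flag as the one delicate step is correct as you have it: the four terms of the expansion are $\mu(\hat P_{XYZ})$, $-\mu(\hat P_X\hat P_{YZ})$, $-\mu(\hat P_{XZ}\hat P_Y)$, and $+\mu(\hat P_X\hat P_Y\hat P_Z)$, matching the paper's definition of $\Delta_{(Z)}P$.
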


Analogous expressions hold for $\Delta_{(X)}\hat{P}$ and $\Delta_{(Y)}\hat{P}$.
Unlike in the two-variable case where either matrix or both can be
centered, centering of each matrix in the three-variable case has
a different meaning. In particular, one requires centering of all
three kernel matrices to perform a {}``complete'' Lancaster interaction
test, as given by the following Proposition. 
\begin{prop}
[Lancaster interaction] \label{prop: Lancaster-interaction}$\left\Vert \Delta_{L}\hat{P}\right\Vert _{k\otimes l\otimes m}^{2}=\frac{1}{n^{2}}\left(\tilde{K}\circ\tilde{L}\circ\tilde{M}\right)_{++}.$
\end{prop}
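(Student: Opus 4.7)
The plan is to lift the formal product identity $\Delta_L P = \prod_{i=1}^{3}(P_{X_i}^{*}-P_{X_i})$ to a tensor factorisation at the feature level, and then read off the squared RKHS norm directly. Writing $\varphi,\chi,\xi$ for the canonical feature maps of $k,l,m$ and $\mu_X,\mu_Y,\mu_Z$ for the marginal mean embeddings, I would first establish
$$\mu_{k\otimes l\otimes m}(\Delta_L P) \;=\; \mathbb{E}_{(X,Y,Z)\sim P_{XYZ}}\bigl[(\varphi(X)-\mu_X)\otimes(\chi(Y)-\mu_Y)\otimes(\xi(Z)-\mu_Z)\bigr].$$
To verify this, I expand the triple tensor product into its eight summands and pull the expectation under $P_{XYZ}$ inside. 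Constants factor out of the expectation, so for example $\mathbb{E}_{XYZ}[\varphi(X)\otimes\mu_Y\otimes\xi(Z)]$ equals $\mu_{k\otimes l\otimes m}(P_Y P_{XZ})$ (with $\mu_Y$ sitting in the middle slot), because the $(X,Z)$-marginal of $P_{XYZ}$ is $P_{XZ}$. Collecting the eight summands reproduces the five signed measures in \eqref{eq: 3Lancaster} with matching signs; in particular the three $(-1)^{2}$-weighted terms combine with the single $(-1)^{3}$-weighted term to yield coefficient $3-1 = 2$ on $\mu(P_X P_Y P_Z)$.

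Specialising to the empirical measure $\hat{P}$, each $\mu_X$ becomes $\hat{\mu}_X = n^{-1}\sum_{i}\varphi(X_i)$ (analogously for $Y,Z$) and the outer expectation becomes a sample average, so
$$\mu_{k\otimes l\otimes m}(\Delta_L\hat{P}) \;=\; \frac{1}{n}\sum_{i=1}^{n}\bigl(\varphi(X_i)-\hat{\mu}_X\bigr)\otimes\bigl(\chi(Y_i)-\hat{\mu}_Y\bigr)\otimes\bigl(\xi(Z_i)-\hat{\mu}_Z\bigr).$$
Taking the squared norm and using that the inner product on $\mathcal{H}_k\otimes\mathcal{H}_l\otimes\mathcal{H}_m$ factorises across slots reduces $\|\mu(\Delta_L\hat{P})\|^{2}$ to $n^{-2}\sum_{i,j}\tilde{K}_{ij}\tilde{L}_{ij}\tilde{M}_{ij}$, via the direct identity
$$\langle\varphi(X_i)-\hat{\mu}_X,\varphi(X_j)-\hat{\mu}_X\rangle_{\mathcal{H}_k} \;=\; K_{ij} - \tfrac{1}{n}K_{i+} - \tfrac{1}{n}K_{+j} + \tfrac{1}{n^{2}}K_{++} \;=\; (HKH)_{ij} \;=\; \tilde{K}_{ij},$$
together with the corresponding identities for $\tilde{L}$ and $\tilde{M}$. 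This is exactly $\tfrac{1}{n^{2}}(\tilde{K}\circ\tilde{L}\circ\tilde{M})_{++}$.

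The main obstacle is the first step: keeping track of signs and multiplicities when pulling the expectation through the eight summands, so that the coefficients of the five product-measure embeddings line up with \eqref{eq: 3Lancaster} (in particular the $+2$ weight on $P_X P_Y P_Z$). A purely mechanical alternative would be to expand $\|\Delta_L\hat{P}\|^{2}$ bilinearly into the $25$ inner-product entries of Table \ref{tab: 3vartab} with coefficients dictated by the signature $(1,-1,-1,-1,2)$, and independently expand $(\tilde{K}\circ\tilde{L}\circ\tilde{M})_{++}$ into the $4^{3}=64$ monomials produced by $\tilde{K}_{ij} = K_{ij}-n^{-1}K_{i+}-n^{-1}K_{+j}+n^{-2}K_{++}$; matching the two expansions term-by-term is straightforward but considerably more bookkeeping than the tensor-level argument above, and obscures why centering all three Gram matrices is precisely the right operation.
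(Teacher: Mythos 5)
Your proof is correct and is in substance the same as the paper's own argument in Appendix A.2: identifying $\mu_{k\otimes l\otimes m}(\Delta_L P)$ with the expected triple tensor product of centered feature maps is exactly the statement $\left\Vert \Delta_{L}P\right\Vert _{k\otimes l\otimes m}^{2}=\left\Vert P_{XYZ}\right\Vert _{\tilde{k}_{P_{X}}\otimes\tilde{l}_{P_{Y}}\otimes\tilde{m}_{P_{Z}}}^{2}$, since the centered kernel $\tilde{k}_{P_X}$ is the reproducing kernel of the centered feature map $\varphi(\cdot)-\mu_X$, and both you and the paper then specialize to $\hat{P}$ and read off $\frac{1}{n^{2}}(\tilde{K}\circ\tilde{L}\circ\tilde{M})_{++}$. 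The only cosmetic difference is that you omit the paper's side observation that the cross terms in $\left\Vert \Delta_{L}P\right\Vert _{\tilde{k}\otimes\tilde{l}\otimes\tilde{m}}^{2}$ vanish, which is not actually needed to establish the proposition.
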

The proofs of these Propositions are given in Appendix \ref{sec: Proofs}.
We summarize various hypotheses and the associated V-statistics in
the Appendix \ref{sec: centering}. As we will demonstrate in the
experiments in Section \ref{sec:Experiments}, while particularly
useful for testing the factorization hypothesis, i.e., for $(X,Y)\ci Z\,\vee\,(X,Z)\ci Y\,\vee\,(Y,Z)\ci X$,
the statistic $\left\Vert \Delta_{L}\hat{P}\right\Vert _{k\otimes l\otimes m}^{2}$
can also be used for powerful tests of either the individual hypotheses
$(Y,Z)\ci X$, $(X,Z)\ci Y$, or $(X,Y)\ci Z$, or for total independence
testing, i.e., $P_{XYZ}=P_{X}P_{Y}P_{Z}$, as it vanishes in all of
these cases. The null distribution under each of these hypotheses
can be estimated using a standard permutation-based approach described
in Appendix \ref{sec: Permutation test}.

Another way to obtain the Lancaster interaction statistic is as the
RKHS norm of the joint {}``central moment'' $\Sigma_{XYZ}=\mathbb{\mathbb{\mathbb{E}}}_{XYZ}[\left(k_{X}-\mu_{X}\right)\otimes\left(l_{Y}-\mu_{Y}\right)\otimes\left(m_{Z}-\mu_{Z}\right)]$
of RKHS-valued random variables $k_{X}$, $l_{Y}$ and $m_{Z}$ (understood
as an element of the tensor RKHS $\mathcal{H}_{k}\otimes\mathcal{H}_{l}\otimes\mathcal{H}_{m}$).
This is related to a classical characterization of the Lancaster interaction
\cite[Ch. XII]{Lancaster1969}: there is no Lancaster interaction
between $X$, $Y$ and $Z$ if and only if $\textrm{cov}\left[f(X),g(Y),h(Z)\right]=0$
for all $L_{2}$ functions $f$, $g$ and $h$. There is an analogous
result in our case (proof is given in Appendix \ref{sec: Proofs}),
which states
\begin{prop}
\label{prop: cov_RKHS_functions}$\left\Vert \Delta_{L}P\right\Vert _{k\otimes l\otimes m}=0$
if and only if $\textrm{cov}\left[f(X),g(Y),h(Z)\right]=0$ for all
$f\in\mathcal{H}_{k}$, $g\in\mathcal{H}_{l}$, $h\in\mathcal{H}_{m}$.
\end{prop}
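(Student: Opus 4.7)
The plan is to identify the RKHS norm of the Lancaster embedding with a supremum (over unit-ball tensors) of three-way covariances, via the reproducing property on the product RKHS $\mathcal{H}_{k}\otimes\mathcal{H}_{l}\otimes\mathcal{H}_{m}$, which is the RKHS associated with the product kernel $k\otimes l\otimes m$ on $\mathcal{X}\times\mathcal{Y}\times\mathcal{Z}$. The key algebraic computation is to evaluate $\int (f\otimes g\otimes h)\,d\Delta_{L}P$ by plugging in the five-term expansion \eqref{eq: 3Lancaster}: each summand of $\Delta_{L}P$ factorises the integral of $f(x)g(y)h(z)$ across the appropriate marginals, and collecting the resulting terms yields precisely
\[
\int f(x)g(y)h(z)\,d\Delta_{L}P(x,y,z)=\mathbb{E}\bigl[(f(X)-\mathbb{E}f(X))(g(Y)-\mathbb{E}g(Y))(h(Z)-\mathbb{E}h(Z))\bigr],
\]
which is exactly $\textrm{cov}[f(X),g(Y),h(Z)]$.

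Next I would invoke the reproducing property for the embedding: for every simple tensor $F=f\otimes g\otimes h\in\mathcal{H}_{k}\otimes\mathcal{H}_{l}\otimes\mathcal{H}_{m}$,
\[
\langle F,\mu_{k\otimes l\otimes m}(\Delta_{L}P)\rangle_{\mathcal{H}_{k}\otimes\mathcal{H}_{l}\otimes\mathcal{H}_{m}}=\int F\,d\Delta_{L}P=\textrm{cov}[f(X),g(Y),h(Z)].
\]
The forward direction is then immediate: if $\|\Delta_{L}P\|_{k\otimes l\otimes m}=0$, then $\mu_{k\otimes l\otimes m}(\Delta_{L}P)=0$ as an element of the Hilbert space, and hence the inner product on the left-hand side vanishes for every triple $(f,g,h)$.

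For the converse, I would use the standard fact that finite linear combinations of simple tensors $f\otimes g\otimes h$ are dense in $\mathcal{H}_{k}\otimes\mathcal{H}_{l}\otimes\mathcal{H}_{m}$. If $\textrm{cov}[f(X),g(Y),h(Z)]=0$ for all $f,g,h$ in the three RKHSs, the display above shows that $\mu_{k\otimes l\otimes m}(\Delta_{L}P)$ is orthogonal to a dense subspace of the product RKHS, so it equals zero, and consequently $\|\Delta_{L}P\|_{k\otimes l\otimes m}=0$.

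The main obstacle is really only the bookkeeping in the first step: one must track signs and marginalisations carefully through the five terms of $\Delta_{L}P$ to see that the result is exactly the third joint central moment (and not some other third-order polynomial expression in the marginal means). Once that identification is made, both implications reduce to Hilbert-space generalities and the density of simple tensors, so no further kernel assumptions (e.g.\ ISPD or characteristic) are needed; this is in contrast to the question of whether vanishing norm forces $\Delta_{L}P=0$ as a signed measure, which is not the content of this proposition.
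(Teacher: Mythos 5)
Your proof is correct, and it takes a somewhat more direct route than the paper's own argument. You compute the pairing $\int f\otimes g\otimes h\,d\Delta_{L}P$ directly from the five-term expansion \eqref{eq: 3Lancaster} and identify it with the joint third central moment, then use the defining property of the embedding and the density of simple tensors in $\mathcal{H}_{k}\otimes\mathcal{H}_{l}\otimes\mathcal{H}_{m}$ to conclude both directions. (Your algebra in the first step is right: both sides expand to $\mathbb{E}[fgh]-\mathbb{E}[fg]\mathbb{E}h-\mathbb{E}[fh]\mathbb{E}g-\mathbb{E}[gh]\mathbb{E}f+2\mathbb{E}f\,\mathbb{E}g\,\mathbb{E}h$.) The paper instead routes through a Hilbert--Schmidt covariance operator $\Sigma_{(XY)Z}:\mathcal{H}_{k}\otimes\mathcal{H}_{l}\to\mathcal{H}_{m}$ with kernels centered at the marginals, uses the isometry $\left\Vert \Sigma_{(XY)Z}\right\Vert _{HS}=\left\Vert \Delta_{L}P\right\Vert _{k\otimes l\otimes m}$ established in the proof of Proposition~\ref{prop: Lancaster-interaction}, and finishes the forward direction via $\left\Vert \Sigma_{(XY)Z}\right\Vert _{op}\le\left\Vert \Sigma_{(XY)Z}\right\Vert _{HS}$. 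What the paper's version buys is the explicit operator picture (useful elsewhere in the paper, e.g.\ for the asymptotic analysis); what yours buys is a shorter, self-contained argument that doesn't need the HS/operator-norm machinery and stays entirely within the signed-measure embedding framework. Your closing remark is also accurate: no ISPD/characteristic assumption is needed for this equivalence, only for the stronger claim that vanishing norm forces $\Delta_{L}P=0$ as a measure.
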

And finally, we give an estimator of the RKHS norm of the total independence
measure $\Delta_{tot}P$.
\begin{prop}
[Total independence]Let $\Delta_{tot}\hat{P}=\hat{P}_{XYZ}-\hat{P}_{X}\hat{P}_{Y}\hat{P}_{Z}$.
Then:
\begin{eqnarray*}
\left\Vert \Delta_{tot}\hat{P}\right\Vert _{k\otimes l\otimes m}^{2} & = & \frac{1}{n^{2}}\left(K\circ L\circ M\right)_{++}-\frac{2}{n^{4}}tr(K_{+}\circ L_{+}\circ M_{+})+\frac{1}{n^{6}}K_{++}L_{++}M_{++}.
\end{eqnarray*}

\end{prop}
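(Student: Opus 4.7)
The plan is to expand the squared RKHS norm by bilinearity of the inner product $\langle\langle \cdot,\cdot\rangle\rangle_{k\otimes l\otimes m}$ on signed measures, which gives
\begin{align*}
\left\Vert \Delta_{tot}\hat{P}\right\Vert_{k\otimes l\otimes m}^{2}
&= \left\langle\left\langle \hat{P}_{XYZ},\hat{P}_{XYZ}\right\rangle\right\rangle_{k\otimes l\otimes m}
 - 2\left\langle\left\langle \hat{P}_{XYZ},\hat{P}_{X}\hat{P}_{Y}\hat{P}_{Z}\right\rangle\right\rangle_{k\otimes l\otimes m} \\
&\quad + \left\langle\left\langle \hat{P}_{X}\hat{P}_{Y}\hat{P}_{Z},\hat{P}_{X}\hat{P}_{Y}\hat{P}_{Z}\right\rangle\right\rangle_{k\otimes l\otimes m}.
\end{align*}
Then I would read off each of the three inner products from the diagonal and one off-diagonal entry of Table \ref{tab: 3vartab}, and collect normalizing powers of $n$.

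Specifically, the $(P_{XYZ},P_{XYZ})$ entry gives $\langle\langle \hat{P}_{XYZ},\hat{P}_{XYZ}\rangle\rangle = \frac{1}{n^{2}}(K\circ L\circ M)_{++}$, the $(P_{X}P_{Y}P_{Z},P_{X}P_{Y}P_{Z})$ entry gives $\langle\langle \hat{P}_{X}\hat{P}_{Y}\hat{P}_{Z},\hat{P}_{X}\hat{P}_{Y}\hat{P}_{Z}\rangle\rangle = \frac{1}{n^{6}}K_{++}L_{++}M_{++}$, and the $(P_{XYZ},P_{X}P_{Y}P_{Z})$ entry gives $\langle\langle \hat{P}_{XYZ},\hat{P}_{X}\hat{P}_{Y}\hat{P}_{Z}\rangle\rangle = \frac{1}{n^{4}}\mathrm{tr}(K_{+}\circ L_{+}\circ M_{+})$. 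Substituting these three expressions into the expansion above and combining yields the claimed formula.

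If any entry of Table \ref{tab: 3vartab} is not accepted as given, the one worth verifying directly is the cross term. Writing it out from the definition,
\[
\left\langle\left\langle \hat{P}_{XYZ},\hat{P}_{X}\hat{P}_{Y}\hat{P}_{Z}\right\rangle\right\rangle_{k\otimes l\otimes m}
= \frac{1}{n^{4}}\sum_{i=1}^{n}\sum_{a,b,c=1}^{n} k(X_{i},X_{a})\,l(Y_{i},Y_{b})\,m(Z_{i},Z_{c})
= \frac{1}{n^{4}}\sum_{i=1}^{n} K_{i+}L_{i+}M_{i+},
\]
and since $K,L,M$ are symmetric with $[K_{+}]_{ii}=K_{+i}=K_{i+}$, the final sum equals $\frac{1}{n^{4}}\mathrm{tr}(K_{+}\circ L_{+}\circ M_{+})$, matching the table.

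There is no real obstacle: given the table of pairwise inner product estimators, the computation is a two-line bilinear expansion and substitution. The only arithmetic subtlety is keeping track of the $1/n$ factors absorbed into the rescaled measures $nP_{XYZ}$ and $n^{3}P_{X}P_{Y}P_{Z}$ used in the header row of Table \ref{tab: 3vartab}; restoring these correctly produces the $1/n^{2}$, $2/n^{4}$, and $1/n^{6}$ prefactors in the stated identity.
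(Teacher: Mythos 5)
Your proposal is correct and follows exactly the paper's stated proof: the paper simply says the result "follows from reading off the corresponding inner-product V-statistics from Table \ref{tab: 3vartab}," and you do just that, with the added (welcome but not novel) step of verifying the cross term directly from the definition.
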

The proof follows simply from reading off the corresponding inner-product
V-statistics from the Table \ref{tab: 3vartab}. While the test statistic
for total independence has a somewhat more complicated form than that
of Lancaster interaction, it can also be computed in quadratic time.

\subsection{Interaction for $D>3$ \label{sec:The-case-D-bigger-3}}

Streitberg's correction of the interaction measure for $D>3$ has
the form
\begin{equation}
\Delta_{S}P=\sum_{\pi}(-1)^{\left|\pi\right|-1}\left(\left|\pi\right|-1\right)!J_{\pi}P,
\end{equation}
where the sum is taken over all partitions of the set $\left\{ 1,2,\ldots,n\right\} $,
$\left|\pi\right|$ denotes the size of the partition (number of blocks),
and $J_{\pi}:P\mapsto P_{\pi}$ is the \emph{partition operator} on
probability measures, which for a fixed partition $\pi=\pi_{1}|\pi_{2}|\ldots|\pi_{r}$
maps the probability measure $P$ to the product measure $P_{\pi}=\prod_{j=1}^{r}P_{\pi_{j}}$,
where $P_{\pi_{j}}$ is the marginal distribution of the subvector
$\left(X_{i}\,:\, i\in\pi_{j}\right).$ The coefficients correspond
to the M\"{o}bius inversion on the partition lattice \cite{Speed1983}.
While the Lancaster interaction has an interpretation in terms of
joint central moments, Streitberg's correction corresponds to joint
cumulants \cite[Section 4]{Streitberg1990}. Therefore, a central
moment expression like $\mathbb{\mathbb{\mathbb{E}}}_{X_{1}\ldots X_{n}}[\left(k_{X_{1}}^{(1)}-\mu_{X_{1}}\right)\otimes\cdots\otimes\left(k_{X_{n}}^{(n)}-\mu_{X_{n}}\right)]$
does not capture the correct notion of the interaction measure. Thus,
while one can in principle construct RKHS embeddings of higher-order
interaction measures, and compute RKHS norms using a calculus of $V$-statistics
and Gram-matrices analogous to that of Table \ref{tab: 3vartab},
it does not seem possible to avoid summing over all partitions when
computing the corresponding statistics, yielding a computationally
prohibitive approach in general. This can be viewed by analogy with
the scalar case, where it is well known that the second and third
cumulants coincide with the second and third central moments, whereas
the higher order cumulants are neither moments nor central moments,
but some other polynomials of the moments.

\subsection{Total independence for $D>3$}

In general, the test statistic for total independence in the $D$-variable
case is 
\begin{eqnarray*}
\left\Vert \hat{P}_{X_{1:D}}-\prod_{i=1}^{D}\hat{P}_{X_{i}}\right\Vert _{\bigotimes_{i=1}^{D}k^{(i)}}^{2} & = & \frac{1}{n^{2}}\sum_{a=1}^{n}\sum_{b=1}^{n}\prod_{i=1}^{D}K_{ab}^{(i)}-\frac{2}{n^{D+1}}\sum_{a=1}^{n}\prod_{i=1}^{D}\sum_{b=1}^{n}K_{ab}^{(i)}\\
 & + & \frac{1}{n^{2D}}\prod_{i=1}^{D}\sum_{a=1}^{n}\sum_{b=1}^{n}K_{ab}^{(i)}.
\end{eqnarray*}
A similar statistic for total independence is discussed by \cite{Kankainen1995}
where testing of total independence based on empirical characteristic
functions is considered. Our test has a direct interpretation in terms
of characteristic functions as well, which is straightforward to see
in the case of translation invariant kernels on Euclidean spaces,
using their Bochner representation, similarly as in \cite[Corollary 4]{SriGreFukLanetal10}.

\section{Experiments \label{sec:Experiments}}

We investigate the performance of various permutation based tests
that use the Lancaster statistic $\left\Vert \Delta_{L}\hat{P}\right\Vert _{k\otimes l\otimes m}^{2}$
and the total independence statistic $\left\Vert \Delta_{tot}\hat{P}\right\Vert _{k\otimes l\otimes m}^{2}$
on two synthetic datasets where $X$, $Y$ and $Z$ are random vectors
of increasing dimensionality:

\textbf{Dataset A: Pairwise independent, mutually dependent data.}
Our first dataset is a triplet of random vectors $(X,Y,Z)$ on $\mathbb{R}^{p}\times\mathbb{R}^{p}\times\mathbb{R}^{p}$,
with $X,Y\iid\mathcal{N}(0,I_{p})$, $W\sim Exp(\frac{1}{\sqrt{2}})$,
$Z_{1}=sign(X_{1}Y_{1})W$, and $Z_{\ensuremath{2:p}}\sim\mathcal{N}(0,I_{p-1})$,
i.e., the product of $X_{1}Y_{1}$ determines the sign of $Z_{1}$,
while the remaining $p-1$ dimensions are independent (and serve as
noise in this example).%
\footnote{Note that there is no reason for $X$, $Y$ and $Z$ to have the same
dimensionality $p$ - this is done for simplicity of exposition.%
} In this case, $(X,Y,Z)$ is clearly a pairwise independent but mutually
dependent triplet. The mutual dependence becomes increasingly difficult
to detect as the dimensionality $p$ increases.

\begin{figure}
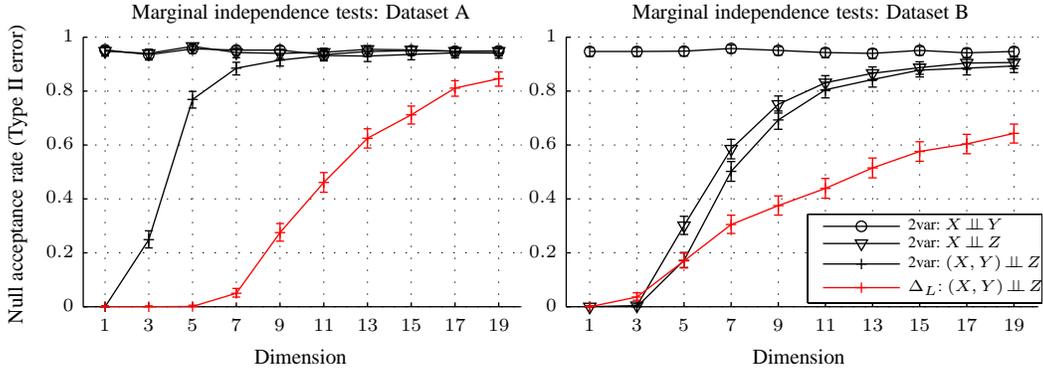

\centering
\psfragfig{MarginalA}
\psfragfig{MarginalB}

\caption{\label{fig: MarginalPlot} Two-variable kernel independence tests
and the test for $(X,Y)\ci Z$ using the Lancaster statistic}
\end{figure}

\begin{figure}
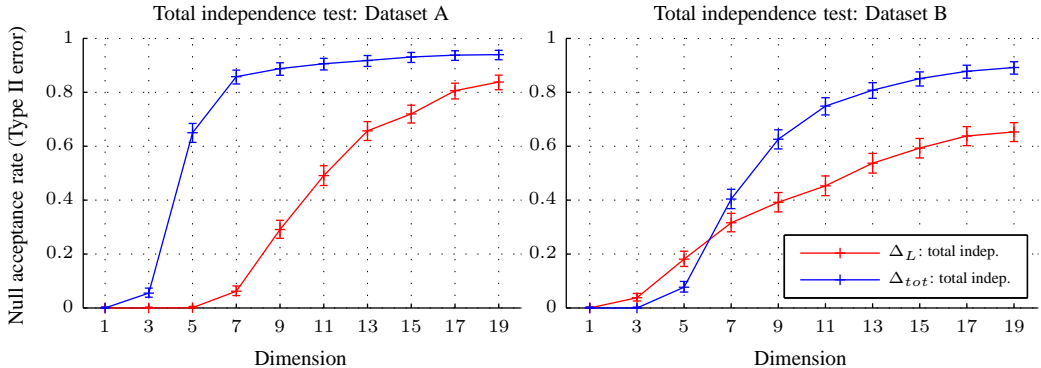

\centering
\psfragfig{TotalA}
\psfragfig{TotalB}

\caption{\label{fig: TotalPlot} Total independence: $\Delta_{tot}\hat{P}$
vs. $\Delta_{L}\hat{P}$ .}
\end{figure}

\begin{figure}
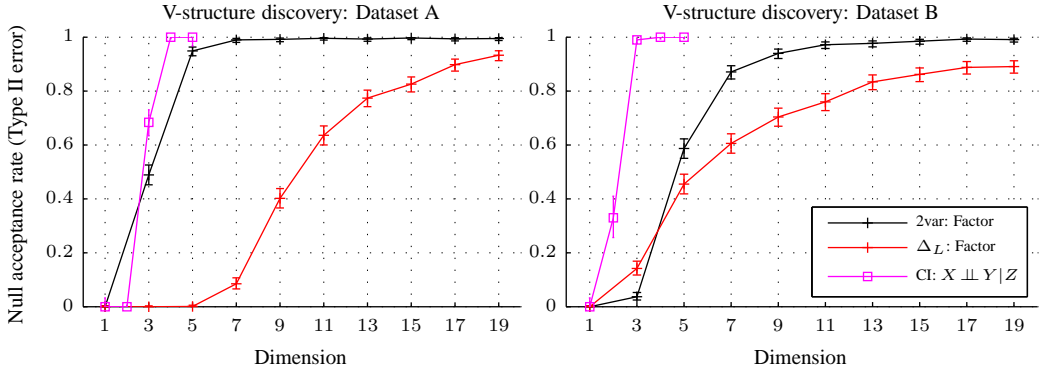

\centering
\psfragfig{FactorA}
\psfragfig{FactorB}

\caption{\label{fig: VstructPlot}Factorization hypothesis: Lancaster statistic
vs. a two-variable based test; Test for $X\ci Y|Z$ from \cite{Zhang2011}}
\end{figure}

\textbf{Dataset B: Joint dependence can be easier to detect.} In this
example, we consider a triplet of random vectors $(X,Y,Z)$ on $\mathbb{R}^{p}\times\mathbb{R}^{p}\times\mathbb{R}^{p}$,
with $X,Y\iid\mathcal{N}(0,I_{p})$, $Z_{\ensuremath{2:p}}\sim\mathcal{N}(0,I_{p-1})$,
and{\small 
\[
Z_{1}=\begin{cases}
X_{1}^{2}+\epsilon, & w.p.\;1/3,\\
Y_{1}^{2}+\epsilon, & w.p.\;1/3,\\
X_{1}Y_{1}+\epsilon, & w.p.\;1/3,
\end{cases}
\]
}where $\epsilon\sim\mathcal{N}(0,0.1^{2})$. Thus, dependence of
$Z$ on pair $(X,Y)$ is stronger than on $X$ and $Y$ individually.

In all cases, we use permutation tests as described in Appendix \ref{sec: Permutation test}.
The test level is set to $\alpha=0.05$, and we use gaussian kernels
with bandwidth set to the interpoint median distance. In Figure \ref{fig: MarginalPlot},
we plot the null hypothesis acceptance rates of the standard kernel
two-variable tests for $X\ci Y$ (which is true for both datasets
A and B, and accepted at the correct rate across all dimensions) and
for $X\ci Z$ (which is true only for dataset A), as well as of the
standard kernel two-variable test for $(X,Y)\ci Z$, and the test
for $(X,Y)\ci Z$ using the Lancaster statistic. As expected, in dataset
B, we see that dependence of $Z$ on pair $(X,Y)$ is somewhat easier
to detect than on $X$ individually with two-variable tests. In both
datasets, however, the Lancaster interaction appears significantly
more sensitive in detecting this dependence as dimensionality $p$
increases. Figure \ref{fig: TotalPlot} plots the Type II error of
total independence tests with statistics $\left\Vert \Delta_{L}\hat{P}\right\Vert _{k\otimes l\otimes m}^{2}$
and $\left\Vert \Delta_{tot}\hat{P}\right\Vert _{k\otimes l\otimes m}^{2}$.
The Lancaster statistic outperforms the total independence statistic
everywhere apart from the Dataset B when the number of dimensions
is small (between 1 and 5). Figure \ref{fig: TotalPlot} plots the
Type II error of the factorization test, i.e., test for $(X,Y)\ci Z\,\vee\,(X,Z)\ci Y\,\vee\,(Y,Z)\ci X$
with Lancaster statistic with Holm-Bonferroni correction as described
in Appendix \ref{sec: Permutation test}, as well as the two-variable
based test (which performs three standard two-variable tests and applies
the Holm-Bonferroni correction). We also plot the Type II error for
the conditional independence test for $X\ci Y|Z$ from \cite{Zhang2011}.
Under assumption that $X\ci Y$ (correct on both datasets), negation
of each of these three hypotheses is equivalent to the presence of
V-structure $X\to Z\leftarrow Y$, so the rejection of the null can
be viewed as a V-structure detection procedure. As dimensionality
increases, the Lancaster statistic appears significantly more sensitive
to the interactions present than the competing approaches, which is
particularly pronounced in Dataset A.

\section{Conclusions}

We have constructed permutation-based nonparametric tests for three-variable
interactions, including the Lancaster interaction and total independence.
The tests can be used in datasets where only higher-order interactions
persist, i.e., variables are pairwise independent; as well as in cases
where joint dependence may be easier to detect than pairwise dependence,
for instance when the effect of two variables on a third is not additive.
The flexibility of the framework of RKHS embeddings of signed measures
allows us to consider variables that are themselves multidimensional.
While the total independence case readily generalizes to more than
three dimensions, the combinatorial nature of joint cumulants implies
that detecting interactions of higher order requires significantly
more costly computation, and is an interesting topic for future work. 

\footnotesize

\normalsize

\appendix

\section{\label{sec: Proofs}Proofs}

\subsection{Proof of Proposition \ref{prop: Incomplete-Lancaster-interaction}}

Some basic matrix algebra used in this proof is reviewed in Appendix
\ref{sec: matrix_algebra}. The proof of the following simple Lemma
directly follows from the results therein.
\begin{lem}
The following equalities hold:
\begin{enumerate}
\item \textup{$\left(K_{+}\circ L_{+}\circ M\right)_{++}=\left(K_{+}^{\top}\circ L_{+}^{\top}\circ M\right)_{++}=tr(K_{+}\circ L_{+}\circ M_{+})=\sum_{a=1}^{n}K_{a+}L_{a+}M_{a+}$}
\item \textup{$\left(K_{+}\circ L\circ M_{+}^{\top}\right)_{++}=\left(KLM\right)_{++}$}
\end{enumerate}
\end{lem}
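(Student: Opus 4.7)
The plan is to prove both identities by direct index manipulation of the Hadamard-product sums, exploiting two structural facts: since $K_{+}=\one\onet K$, the entry $[K_{+}]_{ij}=K_{+j}$ is constant along columns, while $[K_{+}^{\top}]_{ij}=K_{i+}$ is constant along rows; and because the Gram matrices $K,L,M$ are symmetric, $K_{i+}=K_{+i}$ (and similarly for $L$, $M$). With these, every quantity in the lemma collapses to $\sum_{a}K_{a+}L_{a+}M_{a+}$ after a short calculation.

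For part (1), first I would write
\[
\left(K_{+}\circ L_{+}\circ M\right)_{++}=\sum_{i,j}K_{+j}L_{+j}M_{ij},
\]
pull the $i,j$-independent factors $K_{+j}L_{+j}$ out of the sum over $i$, and use $\sum_{i}M_{ij}=M_{+j}$ to obtain $\sum_{j}K_{+j}L_{+j}M_{+j}$; symmetry of $K,L,M$ then rewrites this as $\sum_{a}K_{a+}L_{a+}M_{a+}$. The same template applied to $\left(K_{+}^{\top}\circ L_{+}^{\top}\circ M\right)_{++}=\sum_{i,j}K_{i+}L_{i+}M_{ij}$ produces $\sum_{i}K_{i+}L_{i+}M_{i+}$ after summing over $j$, which agrees with the previous expression. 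Finally, I would expand $tr(K_{+}\circ L_{+}\circ M_{+})=\sum_{a}K_{+a}L_{+a}M_{+a}$, which is again the same thing by symmetry.

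For part (2), the plan is to match both sides to the same double sum. On the left, direct expansion gives
\[
\left(K_{+}\circ L\circ M_{+}^{\top}\right)_{++}=\sum_{i,j}K_{+j}L_{ij}M_{i+}.
\]
On the right, expanding the matrix product yields $(KLM)_{++}=\sum_{i,j,k,l}K_{ik}L_{kl}M_{lj}$; performing the sums over $i$ and $j$ independently (using $\sum_{i}K_{ik}=K_{+k}$ and $\sum_{j}M_{lj}=M_{l+}$) gives $\sum_{k,l}K_{+k}L_{kl}M_{l+}$. Relabeling indices $(k,l)\mapsto(j,i)$ and invoking the symmetry $L_{ij}=L_{ji}$ then aligns the two expressions.

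There is no substantive obstacle: both identities are exercises in careful index bookkeeping combined with the column-constant/row-constant structure of $K_{+}$ and $K_{+}^{\top}$, which is exactly the content of the matrix-algebra preliminaries in Appendix \ref{sec: matrix_algebra} that the authors invoke. The only place one must be attentive is in choosing the correct summation order and applying symmetry at the right moment so that the free index ends up where it belongs.
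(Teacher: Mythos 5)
Your proof is correct and takes essentially the same approach the paper intends: the paper simply states that the lemma ``directly follows from'' the matrix-algebra facts in Appendix~\ref{sec: matrix_algebra} (namely that $[K_{+}]_{ij}=K_{+j}$ is column-constant, $[K_{+}^{\top}]_{ij}=K_{i+}$ is row-constant, and the Gram matrices are symmetric), and your index expansion is exactly the computation that fills in that step.
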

Now, we will take a kernel matrix $M$ and consider its Hadamard product
with $\tilde{K}\circ\tilde{L}$:

\begin{eqnarray*}
\tilde{K}\circ\tilde{L}\circ M & = & K\circ L\circ M-\frac{1}{n}\left[\underset{A}{\underbrace{K\circ L_{+}\circ M}}+\underset{A^{\top}}{\underbrace{K\circ L_{+}^{\top}\circ M}}+\underset{B}{\underbrace{K_{+}\circ L\circ M}}+\underset{B^{\top}}{\underbrace{K_{+}^{\top}\circ L\circ M}}\right]\\
 & + & \frac{1}{n^{2}}\left(K_{++}L\circ M+L_{++}K\circ M\right)\\
 & + & \frac{1}{n^{2}}\left[\underset{C}{\underbrace{K_{+}\circ L_{+}\circ M}}+\underset{C^{\top}}{\underbrace{K_{+}^{\top}\circ L_{+}^{\top}\circ M}}+\underset{D}{\underbrace{K_{+}\circ L_{+}^{\top}\circ M}}+\underset{D^{\top}}{\underbrace{K_{+}^{\top}\circ L_{+}\circ M}}\right]\\
 & - & \frac{1}{n^{3}}K_{++}\left[L_{+}\circ M+L_{+}^{\top}\circ M\right]-\frac{1}{n^{3}}L_{++}\left[K_{+}\circ M+K_{+}^{\top}\circ M\right]\\
 & + & \frac{1}{n^{4}}K_{++}L_{++}M.
\end{eqnarray*}
and thus:
\begin{eqnarray*}
\left(\tilde{K}\circ\tilde{L}\circ M\right)_{++} & = & \left(K\circ L\circ M\right)_{++}-\frac{2}{n}\left(\left(K\circ M\right)L+\left(L\circ M\right)K\right)_{++}\\
 & + & \frac{1}{n^{2}}\left[K_{++}(L\circ M)_{++}+L_{++}(K\circ M)_{++}\right]\\
 & + & \frac{2}{n^{2}}\left[tr(K_{+}\circ L_{+}\circ M_{+})+\left(LMK\right)_{++}\right]\\
 & - & \frac{2}{n^{3}}\left[K_{++}\left(LM\right)_{++}+L_{++}(KM)_{++}\right]\\
 & + & \frac{1}{n^{4}}K_{++}L_{++}M_{++}.
\end{eqnarray*}
where we used that $A_{++}=\left(\left(K\circ M\right)\circ L_{+}\right)_{++}=\left(\left(K\circ M\right)L\right)_{++},$
and similarly $B_{++}=\left(\left(L\circ M\right)K\right)_{++}.$
Also, $C_{++}=tr(K_{+}\circ L_{+}\circ M_{+})$ and $D_{++}=(LMK)_{++}$.

By comparing to the table of V-statistics, we obtain that:
\begin{eqnarray*}
\frac{1}{n^{2}}\left(\tilde{K}\circ\tilde{L}\circ M\right)_{++} & = & \left\Vert \Delta_{(Z)}\hat{P}\right\Vert _{k\otimes l\otimes m}^{2}
\end{eqnarray*}
where $\Delta_{(Z)}\hat{P}=\hat{P}_{XYZ}+\hat{P}_{X}\hat{P}_{Y}\hat{P}_{Z}-\hat{P}_{YZ}\hat{P}_{X}-\hat{P}_{XZ}\hat{P}_{Y}$,
which completes the proof of Proposition \ref{prop: Incomplete-Lancaster-interaction}.
Proposition \ref{prop: Lancaster-interaction} can be proved in an
analogous way by including the additional terms corresponding to centering
of $M$, i.e., $\left(\tilde{K}\circ\tilde{L}\circ M_{+}\right)_{++}$
and $\left(\tilde{K}\circ\tilde{L}\circ M_{++}\right)_{++}$.  In
the next Section, however, we give an alternative proof which gives
more insight into the role that the centering of each Gram matrix
plays.

\subsection{\label{sub:Proof-of-Proposition A2}Proof of Proposition \ref{prop: Lancaster-interaction}}

It will be useful to introduce into notation the kernel centered at
a probability measure $\nu$, given by: 
\begin{equation}
\tilde{k}_{\nu}(z,z'):=k(z,z')+\int\int k(w,w')d\nu(w)d\nu(w)-\int\left[k(z,w)+k(z',w)\right]d\nu(w),\label{eq: centred_kernels}
\end{equation}
Note that $\int\tilde{k}_{\nu}(z,z')d\nu(z)d\nu(z')=0$, i.e., $\mu_{\tilde{k}_{\nu}}(\nu)\equiv0$.

By expanding the population expression of the kernel norm of the joint
under the kernels centered at the marginals, we obtain: 
\begin{eqnarray*}
 &  & \left\Vert P_{XYZ}\right\Vert _{\tilde{k}_{P_{X}}\otimes\tilde{l}_{P_{Y}}\otimes\tilde{m}_{P_{Z}}}^{2}\\
 &  & \quad=\int\int\left[\tilde{k}_{P_{X}}(x,x')\tilde{l}_{P_{Y}}(y,y')\tilde{m}_{P_{Z}}(z,z')\right]\\
 &  & \qquad\qquad dP_{XYZ}(x,y,z)dP_{XYZ}(x',y',z'),
\end{eqnarray*}
Substituting the definition of the centered kernel in \eqref{eq: centred_kernels},
it is readily obtained that 
\begin{eqnarray*}
\left\Vert P_{XYZ}\right\Vert _{\tilde{k}_{P_{X}}\otimes\tilde{l}_{P_{Y}}\otimes\tilde{m}_{P_{Z}}}^{2} & = & \left\Vert \Delta_{L}P\right\Vert _{k\otimes l\otimes m}^{2}.
\end{eqnarray*}
Now, $\left\Vert P_{XYZ}\right\Vert _{\tilde{k}_{P_{X}}\otimes\tilde{l}_{P_{Y}}\otimes\tilde{m}_{P_{Z}}}^{2}$
is the first term in the expansion of $\left\Vert \Delta_{L}P\right\Vert _{\tilde{k}_{P_{X}}\otimes\tilde{l}_{P_{Y}}\otimes\tilde{m}_{P_{Z}}}^{2}$.
Let us show that all the other terms are equal to zero. Indeed, all
the other terms are of the form
\begin{eqnarray*}
\left\langle \left\langle P_{W}Q,Q'\right\rangle \right\rangle _{\tilde{k}_{P_{X}}\otimes\tilde{l}_{P_{Y}}\otimes\tilde{m}_{P_{Z}}},
\end{eqnarray*}
where $W=X$, $Y$, or $Z$ (individual variable). Without loss of
generality, let $W=X$. Then,

\begin{eqnarray*}
 &  & \left\langle \left\langle P_{X}Q,Q'\right\rangle \right\rangle _{\tilde{k}_{P_{X}}\otimes\tilde{l}_{P_{Y}}\otimes\tilde{m}_{P_{Z}}}\\
 &  & \quad=\int\int\int\left[\tilde{k}_{P_{X}}(x,x')\tilde{l}_{P_{Y}}(y,y')\tilde{m}_{P_{Z}}(z,z')\right]\\
 &  & \qquad\qquad\qquad\qquad dP_{X}(x)dQ(y,z)dQ'(x',y',z')\\
 &  & \quad=\int\int\underset{=\left[\mu_{\tilde{k}_{P_{X}}}(P_{X})\right](x')=0}{\underbrace{\int\tilde{k}_{P_{X}}(x,x')dP_{X}(x)}}\tilde{l}_{P_{Y}}(y,y')\tilde{m}_{P_{Z}}(z,z')\\
 &  & \qquad\qquad\qquad\qquad\qquad dQ(y,z)dQ'(x',y',z')\\
 &  & \quad=0.
\end{eqnarray*}
Therefore, 
\begin{eqnarray*}
\left\Vert \Delta_{L}P\right\Vert _{\tilde{k}_{P_{X}}\otimes\tilde{l}_{P_{Y}}\otimes\tilde{m}_{P_{Z}}}^{2} & = & \left\Vert P_{XYZ}\right\Vert _{\tilde{k}_{P_{X}}\otimes\tilde{l}_{P_{Y}}\otimes\tilde{m}_{P_{Z}}}^{2}\\
 & = & \left\Vert \Delta_{L}P\right\Vert _{k\otimes l\otimes m}^{2}.
\end{eqnarray*}
The above is true for any joint distribution $P_{XYZ}$, and in particular
for the empirical joint, whereby:
\begin{eqnarray*}
\left\Vert \Delta_{L}\hat{P}\right\Vert _{k\otimes l\otimes m}^{2} & = & \left\Vert \hat{P}_{XYZ}\right\Vert _{\tilde{k}_{\hat{P}_{X}}\otimes\tilde{l}_{\hat{P}_{Y}}\otimes\tilde{m}_{\hat{P}_{Z}}}^{2}\\
 & = & \frac{1}{n^{2}}\left(\tilde{K}\circ\tilde{L}\circ\tilde{M}\right)_{++}.
\end{eqnarray*}

\subsection{Proof of Proposition \ref{prop: cov_RKHS_functions}}

Consider the element of $\mathcal{H}_{k}\otimes\mathcal{H}_{l}\otimes\mathcal{H}_{m}$
given by $\mathbb{E}_{XYZ}k(\cdot,X)\otimes l(\cdot,Y)\otimes m(\cdot,Z)$.
This can be identified with a Hilbert-Schmidt uncentered covariance
operator $C_{(XY)Z}:\mathcal{H}_{k}\otimes\mathcal{H}_{l}\rightarrow\mathcal{H}_{m}$,
such that $\forall f\in\mathcal{H}_{k},g\in\mathcal{H}_{l},h\in\mathcal{H}_{m}$:
\begin{eqnarray*}
\left\langle C_{(XY)Z}\left[f\otimes g\right],h\right\rangle _{\mathcal{H}_{m}} & = & \mathbb{E}_{XYZ}f(X)g(Y)h(Z).
\end{eqnarray*}
By replacing $k$, $l$, $m$ with kernels centered at the marginals,
we obtain a centered covariance operator $\Sigma_{(XY)Z}$, for which 

\begin{eqnarray*}
\left\langle \Sigma_{(XY)Z}\left[f\otimes g\right],h\right\rangle _{\mathcal{H}_{m}} & = & \mathbb{E}_{XYZ}\tilde{f}(X)\tilde{g}(Y)\tilde{h}(Z)\\
 & = & \textrm{cov}\left[f(X),g(Y),h(Z)\right],
\end{eqnarray*}
where we wrote $\tilde{f}(X)=f(X)-\mathbb{E}f(X)$, and similarly
for $\tilde{g}$ and $\tilde{h}$. Using the usual isometries between
Hilbert-Schmidt spaces and the tensor product spaces:{\small 
\begin{eqnarray*}
 &  & \left\Vert \Sigma_{(XY)Z}\right\Vert _{HS}^{2}\\
 & = & \left\Vert \mathbb{E}_{XYZ}\tilde{k}_{P_{X}}(\cdot,X)\otimes\tilde{l}_{P_{Y}}(\cdot,Y)\otimes\tilde{m}_{P_{Z}}(\cdot,Z)\right\Vert _{\mathcal{H}_{k}\otimes\mathcal{H}_{l}\otimes\mathcal{H}_{m}}^{2}\\
 &  & \qquad\qquad=\left\Vert P_{XYZ}\right\Vert _{\tilde{k}_{P_{X}}\otimes\tilde{l}_{P_{Y}}\otimes\tilde{m}_{P_{Z}}}^{2}\\
 &  & \qquad\qquad\qquad=\left\Vert \Delta_{L}P\right\Vert _{k\otimes l\otimes m}^{2}.
\end{eqnarray*}
}Now, consider the supremum of the three-way covariance taken over
the unit balls of respective RKHSs:

\begin{eqnarray*}
\sup_{f,g,h}\textrm{cov}\left[f(X),g(Y),h(Z)\right] & = & \sup_{f,g,h}\left\langle \Sigma_{(XY)Z}\left[f\otimes g\right],h\right\rangle _{\mathcal{H}_{m}}\\
 & = & \sup_{f,g}\left\Vert \Sigma_{(XY)Z}\left[f\otimes g\right]\right\Vert _{\mathcal{H}_{m}}\\
 & \leq & \sup_{F\in\mathcal{H}_{k}\otimes\mathcal{H}_{l}}\left\Vert \Sigma_{(XY)Z}F\right\Vert _{\mathcal{H}_{m}}\\
 & = & \left\Vert \Sigma_{(XY)Z}\right\Vert _{op}\leq\left\Vert \Sigma_{(XY)Z}\right\Vert _{HS}.
\end{eqnarray*}
and thus, $\left\Vert \Delta_{L}P\right\Vert _{k\otimes l\otimes m}=0$
implies $\sup_{f,g,h}\textrm{cov}\left[f(X),g(Y),h(Z)\right]=0$.
Conversely, if $\textrm{cov}\left[f(X),g(Y),h(Z)\right]=0$ $\forall f,g,h$,
then $\Sigma_{(XY)Z}\left[f\otimes g\right]\equiv0$ $\forall f,g$,
so the linear operator $\Sigma_{(XY)Z}$ vanishes.

\section{\label{sec: centering}The effect of centering}

\begin{table}
\caption{{\footnotesize \label{tab: Vstat-summary}V-statistics for various
hypotheses }}

\centering{}{\scriptsize }%
\begin{tabular}{|c|c|c|c|}
\hline 
{\scriptsize hypothesis} & {\scriptsize V-statistic} & {\scriptsize hypothesis} & {\scriptsize V-statistic}\tabularnewline
\hline 
\hline 
{\scriptsize $(X,Y)\ci Z$} & {\scriptsize $\frac{1}{n^{2}}\left(K\circ L\circ\tilde{M}\right)_{++}$} & {\scriptsize $\Delta_{(X)}P=0$} & {\scriptsize $\frac{1}{n^{2}}\left(K\circ\tilde{L}\circ\tilde{M}\right)_{++}$}\tabularnewline
\hline 
{\scriptsize $(X,Z)\ci Y$} & {\scriptsize $\frac{1}{n^{2}}\left(K\circ\tilde{L}\circ M\right)_{++}$} & {\scriptsize $\Delta_{(Y)}P=0$} & {\scriptsize $\frac{1}{n^{2}}\left(\tilde{K}\circ L\circ\tilde{M}\right)_{++}$}\tabularnewline
\hline 
{\scriptsize $(Y,Z)\ci X$} & {\scriptsize $\frac{1}{n^{2}}\left(\tilde{K}\circ L\circ M\right)_{++}$} & {\scriptsize $\Delta_{(Z)}P=0$} & {\scriptsize $\frac{1}{n^{2}}\left(\tilde{K}\circ\tilde{L}\circ M\right)_{++}$}\tabularnewline
\hline 
 &  & {\scriptsize $\Delta_{L}P=0$} & {\scriptsize $\frac{1}{n^{2}}\left(\tilde{K}\circ\tilde{L}\circ\tilde{M}\right)_{++}$}\tabularnewline
\hline 
\end{tabular}
\end{table}

In a two-variable test, either or both of the kernel matrices can
be centered when computing the test statistic since $\left(K\circ\tilde{L}\right)_{++}=\left(\tilde{K}\circ L\right)_{++}=\left(\tilde{K}\circ\tilde{L}\right)_{++}$.
To see this, simply note that by the idempotence of $H$,

\begin{eqnarray}
\left(K\circ\tilde{L}\right)_{++} & = & tr(KHLH)\nonumber \\
 & = & tr(KH^{2}LH^{2})\nonumber \\
 & = & tr(HKH^{2}LH)\nonumber \\
 & = & \left(HKH\circ HLH\right)_{++}\nonumber \\
 & = & \left(\tilde{K}\circ\tilde{L}\right)_{++}.
\end{eqnarray}
This is no longer true in the three-variable case, where centering
of each matrix has a different meaning. Various hypotheses and their
corresponding V-statistics are summarized in Table \ref{tab: Vstat-summary}.
Note that the {}``composite'' hypotheses are obtained simply by
an appropriate centering of Gram matrices.

\section{\textmd{\normalsize \label{sec: counterexample}$\Delta_{L}P=0\nRightarrow(X,Y)\ci Z\,\vee\,(X,Z)\ci Y\,\vee\,(Y,Z)\ci X$.}}

Consider the following simple example with binary variables $X$,
$Y$, $Z$ with the $2\times2\times2$ probability table given in
Table \ref{tab: Example_of_Lancaster_vanishing}. It is readily checked
that all conditional covariances are equal, so $\Delta_{L}P=0$. It
is also clear, however, that neither variable is independent of the
other two. Therefore, a test for Lancaster interaction \emph{per se}
is not equivalent to testing for the possibility of any factorization
of the joint distribution, but our empirical results suggest that
it can nonetheless provide a useful surrogate. In other words, while
rejection of the null hypothesis $\Delta_{L}P=0$ is highly informative
and implies that interaction is present and \emph{no} non-trivial
factorization of the joint distribution is available, the acceptance
of the null hypothesis should be considered carefully and additional
methods to rule out interaction should be sought.

\begin{table}
\caption{\label{tab: Example_of_Lancaster_vanishing}{\footnotesize An example
of Lancaster interaction measure vanishing for the case where neither
variable is independent of the other two.}}

\centering{}{\small }%
\begin{tabular}{|c||c|}
\hline 
{\small $P(0,0,0)=0.2$} & {\small $P(0,0,1)=0.1$}\tabularnewline
\hline 
{\small $P(0,1,0)=0.1$} & {\small $P(0,1,1)=0.1$}\tabularnewline
\hline 
\hline 
{\small $P(1,0,0)=0.1$} & {\small $P(1,0,1)=0.1$}\tabularnewline
\hline 
{\small $P(1,1,0)=0.1$} & {\small $P(1,1,1)=0.2$}\tabularnewline
\hline 
\end{tabular}
\end{table}

\section{\label{sec: Permutation test}Permutation test}

A permutation test for total independence is easy to construct: it
suffices to compute the value of the statistic (either the Lancaster
statistic $\left\Vert \Delta_{L}\hat{P}\right\Vert _{k\otimes l\otimes m}^{2}$
or the total independence statistic $\left\Vert \Delta_{tot}\hat{P}\right\Vert _{k\otimes l\otimes m}^{2}$)
on $\left\{ \left(X^{(i)},Y^{(\sigma i)},Z^{(\tau i)}\right)\right\} _{i=1}^{n}$,
for randomly drawn independent permutations $\sigma,\tau\in S_{n}$
in order to obtain a sample from the null distribution. 

When testing for \emph{only one} of the hypotheses $(Y,Z)\ci X$,
$(X,Z)\ci Y$, or $(X,Y)\ci Z$, either with a Lancaster statistic
or with a standard two-variable kernel statistic, only one of the
samples should be permuted, e.g., if testing for $(Y,Z)\ci X$, statistics
should be computed on $\left\{ \left(X^{(\sigma i)},Y^{(i)},Z^{(i)}\right)\right\} _{i=1}^{n}$,
for $\sigma\in S_{n}$. However, when testing for the disjunction
of these hypotheses, i.e., for the existence of a nontrivial factorization
of the joint distribution, we are within a multiple hypothesis testing
framework (even though one may deal with a single test statistic,
as in the Lancaster case). To ensure that the required confidence
level $\alpha=0.05$ is reached for the factorization hypothesis,
in the experiments reported in Figure \ref{fig: VstructPlot}, the
Holm's sequentially rejective Bonferroni method \cite{Holm1979} is
used for both the two-variable based and for the Lancaster based factorization
tests. Namely, $p$-values are computed for each of the hypotheses
$(Y,Z)\ci X$, $(X,Z)\ci Y$, or $(X,Y)\ci Z$ using the permutation
test, and sorted in the ascending order $p_{(1)},p_{(2)},p_{(3)}$.
Hypotheses are then rejected sequentially if $p_{(l)}<\frac{\alpha}{4-l}$.
The factorization hypothesis is then rejected if and only if all three
hypotheses are rejected.

\section{Asymptotic behavior}

Using terminology from \cite{Sejdinovic2012a}, kernels $k$ and
$k'$ are said to be equivalent if they induce the same semimetric
on the domain, i.e., $k(x,x)+k(x',x')-2k(x,x')=k'(x,x)+k'(x',x')-2k'(x,x')$
$\forall x,x'$. It can be shown that the Lancaster statistic is invariant
to changing kernels within the kernel equivalence class, i.e., that
\begin{eqnarray*}
\left\Vert \Delta_{L}\hat{P}\right\Vert _{k\otimes l\otimes m}^{2} & = & \left\Vert \Delta_{L}\hat{P}\right\Vert _{k'\otimes l'\otimes m'}^{2},
\end{eqnarray*}
whenever $k,k'$, $l,l'$ and $m,m'$ are equivalent pairs. From here,
\begin{eqnarray*}
\left\Vert \Delta_{L}\hat{P}\right\Vert _{k\otimes l\otimes m}^{2} & = & \left\Vert \Delta_{L}\hat{P}\right\Vert _{\tilde{k}_{P_{X}}\otimes\tilde{l}_{P_{Y}}\otimes\tilde{m}_{P_{Z}}}^{2}.
\end{eqnarray*}
In Section \ref{sub:Proof-of-Proposition A2}, we were able to show
a similar expression but only for changing $k$ to its version $\tilde{k}_{\hat{P}_{X}}$
centered at the \emph{empirical marginal}. Now, under the assumption
of total independence, i.e., that $P_{XYZ}=P_{X}P_{Y}P_{Z}$, the
dominating term in $\left\Vert \Delta_{L}\hat{P}\right\Vert _{\tilde{k}_{P_{X}}\otimes\tilde{l}_{P_{Y}}\otimes\tilde{m}_{P_{Z}}}^{2}$
is $\left\Vert \hat{P}_{XYZ}\right\Vert _{\tilde{k}_{P_{X}}\otimes\tilde{l}_{P_{Y}}\otimes\tilde{m}_{P_{Z}}}^{2}$.
By standard arguments, under total independence, this converges in
distribution to a sum of independent chi-squared variables,
\begin{eqnarray}
n\left\Vert \hat{P}_{XYZ}\right\Vert _{\tilde{k}_{P_{X}}\otimes\tilde{l}_{P_{Y}}\otimes\tilde{m}_{P_{Z}}}^{2} & \rightsquigarrow & \sum_{a=1}^{\infty}\sum_{b=1}^{\infty}\sum_{c=1}^{\infty}\lambda_{a}\eta_{b}\theta_{c}N_{abc}^{2},\label{eq: 3chisquare}
\end{eqnarray}
where $\left\{ \lambda_{a}\right\} $, $\left\{ \eta_{b}\right\} $,
$\left\{ \theta_{c}\right\} $ are, respectively, eigenvalues of integral
operators associated to $\tilde{k}_{P_{X}}$, $\tilde{l}_{P_{Y}}$
and $\tilde{m}_{P_{Z}}$, and $N_{abc}\overset{i.i.d.}{\sim}\mathcal{N}(0,1)$.
Other terms in $\left\Vert \Delta_{L}\hat{P}\right\Vert _{\tilde{k}_{P_{X}}\otimes\tilde{l}_{P_{Y}}\otimes\tilde{m}_{P_{Z}}}^{2}$
can be shown to drop to zero at a faster rate, as in the two-variable
case. The resulting distribution of such a sum of chi-squares can,
in principle, be estimated using a Monte Carlo method, by computing
a number of eigenvalues of $\tilde{K}$, $\tilde{L}$ and $\tilde{M}$,
as in \cite{GreFukHarSri09,Zhang2011}. This is of little practical
value though, as it is in most cases simpler and faster to run a permutation
test, as we describe in Appendix \ref{sec: Permutation test}. On
the other hand, the above result quantifies the highest order of bias
of the V-statistic under total independence to be equal to $\frac{1}{n}\sum_{a=1}^{\infty}\lambda_{a}\sum_{b=1}^{\infty}\eta_{b}\sum_{c=1}^{\infty}\theta_{c}$,
which can be estimated as $\frac{1}{n^{4}}Tr(\tilde{K})Tr(\tilde{L})Tr(\tilde{M}).$
We emphasize that \eqref{eq: 3chisquare} refers to a \emph{null distribution
under total independence} - if say, the null holds because $(X,Y)\ci Z$,
but $X$ and $Y$ are dependent, one needs to instead consider a kernel
on $\mathcal{X}\times\mathcal{Y}$ centered at $P_{XY}$ and the eigenvalues
of its integral operator then replace $\left\{ \lambda_{a}\eta_{b}\right\} $
(triple sum becomes a double sum). This also implies that the bias
term needs to be corrected appropriately.

\section{\label{sec: matrix_algebra}Some useful basic matrix algebra}
\begin{lem}
Let $A$, $B$ be $n\times n$ matrices. The following results hold:
\begin{enumerate}
\item $\onet\one=n$
\item $[\one\onet]_{ij}=1,\;\forall i,j$, and thus $\left(\one\onet\right)_{++}=n^{2}$
\item \textup{$\left(I-\frac{1}{n}\one\onet\right)^{2}=I-\frac{1}{n}\one\onet.$}
\item $\left[A\one\right]_{i}=A_{i+}$, $\left[\onet A\right]_{j}=A_{+j}$
\item $\onet A\one=A_{++}$
\item $\left(A\one\onet\right)_{++}=\left(\one\onet A\right)_{++}=nA_{++}$
\item $\left(\alpha A+\beta B\right)_{++}=\alpha A_{++}+\beta B_{++}$
\item $\left(A\one\onet B\right)_{++}=A_{++}B_{++}$.
\end{enumerate}
\end{lem}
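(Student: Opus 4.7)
The plan is to verify each of the eight identities by direct computation from the definitions of $\one$, $\onet$, and the summation notation $A_{i+}, A_{+j}, A_{++}$ laid out in the paper, using the earlier items as building blocks for the later ones. The whole lemma is a bookkeeping toolkit for the Gram-matrix manipulations in Proposition~\ref{prop: Incomplete-Lancaster-interaction} and Proposition~\ref{prop: Lancaster-interaction}, so no single step is conceptually difficult; the only care needed is to track which indices are being summed and to exploit the scalar identity $\onet\one = n$ whenever a product like $\one\onet\one\onet$ collapses.

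First I would dispatch items (1) and (2) from the definitions: $\onet\one = \sum_{i=1}^n 1\cdot 1 = n$, and $[\one\onet]_{ij} = 1_i \cdot 1_j = 1$, from which $(\one\onet)_{++} = n^2$ is immediate. Item (3) then follows by expanding $(I - \tfrac{1}{n}\one\onet)^{2} = I - \tfrac{2}{n}\one\onet + \tfrac{1}{n^2}\one(\onet\one)\onet$ and substituting item (1) for $\onet\one$; one factor of $n$ cancels, and the two middle terms merge into $-\tfrac{1}{n}\one\onet$, yielding idempotence.

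Items (4) and (5) are just unrolling matrix-vector products: $[A\one]_i = \sum_j A_{ij} = A_{i+}$ and $[\onet A]_j = \sum_i A_{ij} = A_{+j}$, and composing these gives $\onet A\one = \sum_i A_{i+} = A_{++}$. For item (6), I would observe that $[A\one\onet]_{ij} = A_{i+}$ (every column of $A\one\onet$ is the row-sum vector), so summing all entries gives $n\sum_i A_{i+} = nA_{++}$; the companion identity for $\one\onet A$ is symmetric. Item (7) is just linearity of the double sum. Finally, item (8) uses the rank-one structure of $\one\onet$: writing $[A\one\onet B]_{ij} = [A\one]_i \cdot [\onet B]_j = A_{i+}B_{+j}$ by associativity, the double sum factors as $(A\one\onet B)_{++} = \left(\sum_i A_{i+}\right)\left(\sum_j B_{+j}\right) = A_{++}B_{++}$.

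The main obstacle is essentially nonexistent; the one recurring observation is that $\one\onet$ is a rank-one matrix of all ones, so any expression sandwiching it factors into a product of marginal sums. Once this is recognized, each identity reduces to a one- or two-line calculation, and the lemma is complete.
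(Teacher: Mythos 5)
Your proof is correct and matches the paper's approach: the paper only writes out (3) and (8) explicitly (treating the rest as immediate from definitions), and your arguments for those two items are identical to the paper's, with the remaining items handled by the same routine index-unrolling. No gaps; the rank-one-factorization observation for (8) is exactly the point the paper makes by citing item (4).
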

\begin{proof}
(3):
\begin{eqnarray*}
\left(I-\frac{1}{n}\one\onet\right)^{2} & = & I-\frac{2}{n}\one\onet+\frac{1}{n^{2}}\one\underset{n}{\underbrace{\onet\one}}\one^{\top}.
\end{eqnarray*}

(8): From (4), $\left[A\one\onet B\right]_{ij}=A_{i+}B_{+j}$, implying
\[
\left(A\one\onet B\right)_{++}=\sum_{i=1}^{n}A_{i+}\sum_{j=1}^{n}B_{+j}=A_{++}B_{++}.
\]

\end{proof}
Now, let $K$ be a symmetric matrix, and denote $H=I-\frac{1}{n}\one\onet$
(the centering matrix). Then:
\begin{eqnarray*}
HKH & = & \left(I-\frac{1}{n}\one\onet\right)K\left(I-\frac{1}{n}\one\onet\right)\\
 & = & K-\frac{1}{n}\left(K_{+}+K_{+}^{\top}\right)+\frac{1}{n^{2}}K_{++}\one\onet.
\end{eqnarray*}
Note that:
\begin{eqnarray*}
\left(HKH\right)_{++} & = & K_{++}-\frac{1}{n}\left(\left(K_{+}\right)_{++}+\left(K_{+}^{\top}\right)_{++}\right)+\frac{1}{n^{2}}K_{++}\left(\one\onet\right)_{++}\\
 & = & K_{++}-2K_{++}+K_{++}=0.
\end{eqnarray*}

\begin{lem}
The following results hold:
\begin{enumerate}
\item $A\circ\one\onet=\one\onet\circ A=A$
\item $\left(I\circ A\right)_{++}=tr(A)$
\item $\left(A\circ B\right)_{++}=tr(AB^{\top})$
\item \textup{For a symmetric matrix $K$ and any matrix $A$, $\left(A\circ K_{+}\right)_{++}=\left(AK\right)_{++}$,
$\left(A\circ K_{+}^{\top}\right)_{++}=\left(KA\right)_{++}$}
\item \textup{For symmetric matrices $K$, $L$, $\left(K_{+}\circ L_{+}\right)_{++}=\left(K_{+}^{\top}\circ L_{+}^{\top}\right)_{++}=n\left(KL\right)_{++}$}
\item \textup{For symmetric matrices $K$, $L$, $\left(K_{+}\circ L_{+}^{\top}\right)_{++}=\left(K_{+}^{\top}\circ L_{+}\right)_{++}=K_{++}L_{++}$}.
\end{enumerate}
\end{lem}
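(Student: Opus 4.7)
The plan is to verify each of the six items by reducing to entrywise (index) computation using the stated conventions $[K_{+}]_{ij}=K_{+j}$ and $[K_{+}^{\top}]_{ij}=K_{i+}$ (each depending on only a single index), and then manipulating the resulting double or triple sums. No clever ideas are required; everything is combinatorial once the right indices are laid down.

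Items (1)--(3) dispatch immediately. For (1), $[\one\onet]_{ij}=1$, so Hadamard-multiplying by $\one\onet$ leaves every entry of $A$ unchanged. For (2), $I\circ A$ is the diagonal matrix with entries $A_{ii}$, whose sum is $\operatorname{tr}(A)$. For (3), expand
\[
(A\circ B)_{++}=\sum_{i,j}A_{ij}B_{ij}=\sum_{i}\sum_{j}A_{ij}(B^{\top})_{ji}=\sum_{i}[AB^{\top}]_{ii}=\operatorname{tr}(AB^{\top}).
\]

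For (4), I would write
\[
(A\circ K_{+})_{++}=\sum_{i,j}A_{ij}K_{+j}=\sum_{i,j,k}A_{ij}K_{kj}=\sum_{i,j,k}A_{ij}K_{jk}=(AK)_{++},
\]
the only substantive move being the symmetry-based swap $K_{kj}=K_{jk}$. The companion identity $(A\circ K_{+}^{\top})_{++}=(KA)_{++}$ proceeds by the same template after replacing $K_{+j}$ by $K_{i+}$ and summing over the free row index instead. Items (5) and (6) then follow quickly: for (5), applying (4) with $A=K_{+}$ reduces the problem to $(K_{+}L)_{++}=n(KL)_{++}$, which is immediate because every row of $K_{+}=\one\onet K$ equals $\onet K$; the transposed version is analogous via $K_{+}^{\top}=K\one\onet$. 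For (6), observe that $[K_{+}]_{ij}=K_{+j}$ depends only on $j$ while $[L_{+}^{\top}]_{ij}=L_{i+}$ depends only on $i$, so the double sum separates as a product:
\[
(K_{+}\circ L_{+}^{\top})_{++}=\Big(\sum_{i}L_{i+}\Big)\Big(\sum_{j}K_{+j}\Big)=K_{++}L_{++}.
\]

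The main obstacle, such as it is, is purely bookkeeping: keeping the convention $[K_{+}]_{ij}=K_{+j}$ (column sum, independent of $i$) distinct from $[K_{+}^{\top}]_{ij}=K_{i+}$ (row sum, independent of $j$), and invoking the symmetry of $K$ and $L$ precisely where it is needed in (4) and (5) but not in (6), which would in fact hold for arbitrary (non-symmetric) matrices.
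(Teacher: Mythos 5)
Your proof is correct and complete; in fact it proves all six items, whereas the paper only spells out (4) and (5) (the rest being treated as immediate). For item (4) the paper routes through the trace identity of item (3), writing $\left(A\circ K_{+}\right)_{++}=tr(AK\one\onet)=\left(AK\circ\one\onet\right)_{++}=\left(AK\right)_{++}$, while you expand the double sum over indices and use $K_{kj}=K_{jk}$ directly; these are the same argument in different notation. Your observation about exactly where symmetry is used (in (4) and (5) but not (6)) is accurate and slightly sharper than what the paper records.
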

\begin{proof}
(4):$\left(A\circ K_{+}\right)_{++}=tr\left(AK\one\onet\right)=\left(AK\circ\one\onet\right)_{++}=\left(AK\right)_{++}.$
(5): $\left(K_{+}\circ L_{+}\right)_{++}=\left(K_{+}L\right)_{++}=\left(\one\onet KL\right)_{++}=n\left(KL\right)_{++}.$
\end{proof}
 
\begin{prop}
\label{prop: 2varcase-2}Denote $H=I-\frac{1}{n}\one\onet$. Then:

\textup{
\begin{eqnarray*}
\left(K\circ HLH\right)_{++} & = & \left(K\circ L\right)_{++}-\frac{2}{n}\left(KL\right)_{++}+\frac{1}{n^{2}}K_{++}L_{++}.
\end{eqnarray*}
}\end{prop}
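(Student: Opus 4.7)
The plan is a direct calculation: first expand $HLH$ by bilinearity, then distribute the Hadamard product with $K$ over the resulting sum, and finally evaluate the total sum of each piece using the matrix identities from the preceding lemmas.

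Concretely, writing $H = I - \tfrac{1}{n}\one\onet$ and multiplying out,
\begin{equation*}
HLH \;=\; L \;-\; \tfrac{1}{n}\one\onet L \;-\; \tfrac{1}{n}L\one\onet \;+\; \tfrac{1}{n^{2}}\one\onet L\one\onet,
\end{equation*}
which, by the notational conventions ($L_{+}=\one\onet L$, $L_{+}^{\top}=L\one\onet$) and the identity $\onet L\one = L_{++}$, equals $L - \tfrac{1}{n}L_{+} - \tfrac{1}{n}L_{+}^{\top} + \tfrac{1}{n^{2}}L_{++}\one\onet$. Taking the Hadamard product with $K$ and summing all entries yields
\begin{equation*}
(K\circ HLH)_{++} \;=\; (K\circ L)_{++} \;-\; \tfrac{1}{n}(K\circ L_{+})_{++} \;-\; \tfrac{1}{n}(K\circ L_{+}^{\top})_{++} \;+\; \tfrac{1}{n^{2}}L_{++}(K\circ \one\onet)_{++}.
\end{equation*}

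Now I invoke the elementary identities already proved in the matrix algebra lemma: $K\circ \one\onet = K$, so $(K\circ \one\onet)_{++} = K_{++}$; and, for symmetric $K$, the identities $(K\circ L_{+})_{++} = (KL)_{++}$ and $(K\circ L_{+}^{\top})_{++} = (LK)_{++}$. The last small observation is that since both $K$ and $L$ are symmetric, $\onet KL\one = (\onet KL\one)^{\top} = \onet LK\one$, so $(LK)_{++}=(KL)_{++}$ and the two middle terms combine to $-\tfrac{2}{n}(KL)_{++}$. Substituting gives exactly
\begin{equation*}
(K\circ L)_{++} \;-\; \tfrac{2}{n}(KL)_{++} \;+\; \tfrac{1}{n^{2}}K_{++}L_{++},
\end{equation*}
as claimed.

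There is no real obstacle here: the whole proof is bookkeeping in the matrix algebra of the preceding lemma, and the only non-mechanical step is noticing that symmetry of $K,L$ lets the two cross terms be identified. I would present it as a three-line expansion followed by term-by-term simplification.
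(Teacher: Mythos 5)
Your proof is correct and is essentially the same calculation as in the paper: expand $HLH$ into $L - \tfrac{1}{n}L_{+} - \tfrac{1}{n}L_{+}^{\top} + \tfrac{1}{n^{2}}L_{++}\one\onet$, distribute the Hadamard product with $K$, and read off each term from the matrix-algebra lemma; the only slight elaboration you add over the paper's presentation is explicitly noting that symmetry gives $(LK)_{++}=(KL)_{++}$, which the paper leaves implicit when combining the two cross terms into $-\tfrac{2}{n}(KL)_{++}$.
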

\begin{proof}
Let $K$ and $L$ be symmetric matrices and consider $K\circ HLH$.
We obtain:
\begin{eqnarray*}
K\circ HLH & = & K\circ\left(L-\frac{1}{n}\left(L_{+}+L_{+}^{\top}\right)+\frac{1}{n^{2}}L_{++}\one\onet\right)\\
 & = & K\circ L-\frac{1}{n}\left(K\circ L_{+}+K\circ L_{+}^{\top}\right)+\frac{1}{n^{2}}L_{++}K,
\end{eqnarray*}
so that:
\begin{eqnarray*}
\left(K\circ HLH\right)_{++} & = & \left(K\circ L\right)_{++}-\frac{2}{n}\left(KL\right)_{++}+\frac{1}{n^{2}}K_{++}L_{++}.
\end{eqnarray*}
\end{proof}
\begin{cor}
\textup{$tr(HLH)=tr(L)-\frac{1}{n}L_{++}$}
\end{cor}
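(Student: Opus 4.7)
The plan is to derive this identity as a direct corollary of Proposition \ref{prop: 2varcase-2} by specializing $K$ to the identity matrix $I$. The whole proof is essentially a one-line substitution, so there is no real obstacle; the only thing to be careful about is correctly evaluating each of the three terms on the right-hand side of the proposition when $K=I$.

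Concretely, I would proceed as follows. First, note from the basic matrix algebra lemma that $(I \circ A)_{++} = \mathrm{tr}(A)$ for any $n\times n$ matrix $A$. Applying this with $A = HLH$ gives $(I \circ HLH)_{++} = \mathrm{tr}(HLH)$, which is exactly the left-hand side of the desired identity. Similarly, $(I \circ L)_{++} = \mathrm{tr}(L)$, and $(IL)_{++} = L_{++}$, and $I_{++} = \mathrm{tr}(I \cdot I^\top) = n$.

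Substituting $K = I$ into the conclusion of Proposition \ref{prop: 2varcase-2},
\begin{eqnarray*}
\mathrm{tr}(HLH) & = & \mathrm{tr}(L) - \frac{2}{n} L_{++} + \frac{1}{n^{2}} \cdot n \cdot L_{++} \\
 & = & \mathrm{tr}(L) - \frac{2}{n} L_{++} + \frac{1}{n} L_{++} \\
 & = & \mathrm{tr}(L) - \frac{1}{n} L_{++},
\end{eqnarray*}
which is the stated identity. Alternatively, one could prove this directly without invoking the proposition, by expanding $HLH = L - \tfrac{1}{n}(L_{+} + L_{+}^{\top}) + \tfrac{1}{n^{2}} L_{++} \one\onet$ as in the proof of Proposition \ref{prop: 2varcase-2}, taking traces term by term using $\mathrm{tr}(L_{+}) = \mathrm{tr}(L_{+}^{\top}) = L_{++}$ and $\mathrm{tr}(\one\onet) = n$, and collecting. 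Either route yields the result immediately, so no step presents any real difficulty.
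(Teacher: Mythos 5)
Your proof is correct and matches the paper's intent: the result is stated as a corollary of Proposition~\ref{prop: 2varcase-2}, and specializing that proposition to $K=I$ (using $(I\circ A)_{++}=\mathrm{tr}(A)$, $(IL)_{++}=L_{++}$, $I_{++}=n$) is exactly the envisioned derivation. The alternative direct expansion of $HLH$ you sketch is also valid, but the one-line substitution is the intended route.
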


\end{document}